\UseRawInputEncoding
\documentclass[aps,pra,twocolumn,longbibliography]{revtex4-1}
\usepackage[colorlinks=true, citecolor=red, urlcolor=blue ]{hyperref}
 \usepackage[utf8]{inputenc}
\usepackage{graphicx}   
\usepackage{dcolumn}    
\usepackage{bm}         
\usepackage{hyperref}   
\usepackage{xcolor}     
\usepackage{amsmath}
\usepackage{relsize}
\usepackage{comment}
\usepackage[caption=false]{subfig} 
\usepackage{ragged2e}
\usepackage{booktabs}
\usepackage{indentfirst}
\usepackage{epsfig}
\usepackage{amsthm} 
 \usepackage{amssymb}
 \usepackage{subfig}
 \usepackage{mathtools}
\usepackage{mathrsfs} 
\usepackage[normalem]{ulem}
\usepackage{amsthm}

\newcommand{\SuM}[1]{{\color{black}{#1}}}
\newcommand{\SaM}[1]{{\color{black}{#1}}}
\newcommand{\residual}[1]{{\color{red}{#1}}}

\newcommand{\sys}{\mathbb{S}}

\newcommand{\rhi}{\rho_i}
\newcommand{\rhAi}{\rho^A_i}
\newcommand{\rhBi}{\rho^B_i}
\newcommand{\rhti}{\tilde{\rho}_i}

\newcommand{\rhf}{\rho_f}

\newcommand{\gbi}{\gamma_{\beta,i}}
\newcommand{\gbAi}{\gamma^A_{\beta,i}}
\newcommand{\gbBi}{\gamma^B_{\beta,i}}
\newcommand{\gbf}{\gamma_{\beta,f}}
\newcommand{\gbAf}{\gamma^A_{\beta,f}}
\newcommand{\gbBf}{\gamma^B_{\beta,f}}
\newcommand{\chit}{\tilde{\chi}}

\newcommand{\taud}{\tau_d}
\newcommand{\tauc}{\tau_c}
\newcommand{\tautd}{\tilde{\tau}_d}
\newcommand{\tautc}{\tilde{\tau}_c}
\newcommand{\EnAB}{\mathfrak{E}_{AB}}
\newcommand{\EntAB}{\tilde{\mathfrak{E}}_{AB}}
\newcommand{\rhE}{\rho_{\mathcal{E}}}
\newcommand{\rhtE}{\tilde{\rho}_{\mathcal{E}}}
\newcommand{\rhS}{\rho_{\mathcal{S}}}
\newcommand{\rhtS}{\tilde{\rho}_{\mathcal{S}}}
\newcommand{\rhAj}{\rho^A_j}
\newcommand{\rhBj}{\rho^B_j}
\newcommand{\jt}{\tilde{j}}

\newcommand{\rj}{r_{j}}
\newcommand{\rtj}{\tilde{r}_{\tilde{j}}}
\newcommand{\tauAdj}{\tau^A_{d_j}}
\newcommand{\tauBdj}{\tau^B_{d_j}}
\newcommand{\tauAcj}{\tau^A_{c_j}}
\newcommand{\tauBcj}{\tau^B_{c_j}}

\newcommand{\rhdj}{\rho_{d_j}}
\newcommand{\rhcj}{\rho_{c_j}}
\newcommand{\rhtdj}{\tilde{\rho}_{d_{\tilde{j}}}}
\newcommand{\rhtcj}{\tilde{\rho}_{c_{\tilde{j}}}}
\newcommand{\rhstar}{\rho_{*}}

\newcommand{\Hi}{H(t_i)}
\newcommand{\HAi}{H_A(t_i)}
\newcommand{\HBi}{H_B(t_i)}
\newcommand{\Hf}{H(t_f)}
\newcommand{\HAf}{H_A(t_f)}
\newcommand{\HBf}{H_B(t_f)}
\newcommand{\Hint}{H_{ {\rm int}}}

\newcommand{\Pht}{\Phi_{t}}
\newcommand{\Phtf}{\Phi_{t_f}}
\newcommand{\Phtil}{\tilde{\Phi}}

\newcommand{\Pii}{\Pi^i}
\newcommand{\Piil}{\Pi^i_l}
\newcommand{\PiilA}{\Pi^i_{l_A}}
\newcommand{\PiilB}{\Pi^i_{l_B}}

\newcommand{\Pifk}{\Pi^f_k}
\newcommand{\PifkA}{\Pi^f_{k_A}}
\newcommand{\PifkB}{\Pi^f_{k_B}}

\newcommand{\pin}{p^{i}}
\newcommand{\pinl}{p^{i}_l}
\newcommand{\pinlb}{p^{i}_\textbf{l}}

\newcommand{\pfink}{p^{f}_k}
\newcommand{\pfinkb}{p^{f}_\textbf{k}}

\newcommand{\ptink}{\tilde{p}^{i}_k}
\newcommand{\ptinkb}{\tilde{p}^{i}_\textbf{k}}

\newcommand{\ptfinl}{\tilde{p}^{f}_l}
\newcommand{\ptfinlb}{\tilde{p}^{f}_\textbf{l}}
\newcommand{\PG}{P_{\G}}
\newcommand{\PGt}{P_{\Gt}}
\newcommand{\Pc}{P_{coh}}
\newcommand{\pcoh}{p_{\text{coh}}}
\newcommand{\pent}{p_{\text{ent}}}
\newcommand{\Pe}{P_{ent}}

\newcommand{\s}{\sigma}

\newcommand{\sfk}{\sigma^{(f)}_k}
\newcommand{\slk}{\sigma_{l,k}}
\newcommand{\st}{\tilde{\sigma}}

\newcommand{\stfl}{\tilde{\sigma}^{(f)}_l}
\newcommand{\slbkb}{\sigma_{\textbf{l},\textbf{k}}}
\newcommand{\sfkb}{\sigma^{(f)}_\textbf{k}}
\newcommand{\stflb}{\tilde{\sigma}^{(f)}_\textbf{l}}
\newcommand{\SB}{\Sigma}

\newcommand{\SBil}{\Sigma^{(i)}_l}
\newcommand{\SBfk}{\Sigma^{(f)}_k}

\newcommand{\SBlk}{\Sigma_{l,k}}

\newcommand{\SBtfl}{\tilde{\Sigma}^{(f)}_l}
\newcommand{\SBtik}{\tilde{\Sigma}^{(i)}_k}
\newcommand{\Th}{\Theta}

\newcommand{\Thil}{\Theta^{(i)}_l}

\newcommand{\Thtik}{\tilde{\Theta}^{(i)}_k}

\newcommand{\Thfk}{\Theta^{(f)}_k}

\newcommand{\Thtfl}{\tilde{\Theta}^{(f)}_l}
\newcommand{\Thlk}{\Theta_{l,k}}
\newcommand{\stot}{s_{\rm tot}}
\newcommand{\stotlk}{s^{l,k}_{\rm tot}}
\newcommand{\stotlbkb}{s^{\textbf{l},\textbf{k}}_{\rm tot}}
\newcommand{\scorlk}{s^{l,k}_{\rm corr}}
\newcommand{\scorlbkb}{s^{\textbf{l},\textbf{k}}_{\rm corr}}

\newcommand{\Xiilb}{\Xi^{(i)}_\textbf{l}}

\newcommand{\Xitikb}{\tilde{\Xi}^{(i)}_\textbf{k}}

\newcommand{\Xifkb}{\Xi^{(f)}_\textbf{k}}

\newcommand{\Xitflb}{\tilde{\Xi}^{(f)}_\textbf{l}}
\newcommand{\Xilbkb}{\Xi_{\textbf{l},\textbf{k}}}
\newcommand{\Lamlbkb}{\Lambda_{\textbf{l},\textbf{k}}}

\newcommand{\Lamfkb}{\Lambda^{(f)}_{\textbf{k}}}
\newcommand{\Lamilb}{\Lambda^{(i)}_{\textbf{l}}}
\newcommand{\Lamtikb}{\tilde{\Lambda}^{(i)}_{\textbf{k}}}
\newcommand{\Lamtflb}{\tilde{\Lambda}^{(f)}_{\textbf{l}}}
\newcommand{\SSfkbj}{\mathcal{S}^{(f)}_{\textbf{k},j}}
\newcommand{\SSilbj}{\mathcal{S}^{(i)}_{\textbf{l},j}}
\newcommand{\SStflbj}{\tilde{\mathcal{S}}^{(f)}_{\textbf{l},\tilde{j}}}
\newcommand{\SStikbj}{\tilde{\mathcal{S}}^{(i)}_{\textbf{k},\tilde{j}}}
\newcommand{\Psiilb}{\Psi^{(i)}_\textbf{l}}
\newcommand{\Psifkb}{\Psi^{(f)}_\textbf{k}}
\newcommand{\Psitikb}{\tilde{\Psi}^{(i)}_\textbf{k}}
\newcommand{\Psitflb}{\tilde{\Psi}^{(f)}_\textbf{l}}


\newcommand{\B}{\beta}
\newcommand{\ti}{t_i}
\newcommand{\tf}{t_f}
\newcommand{\Ei}{E^i}
\newcommand{\Eil}{E^i_l}
\newcommand{\EilA}{E^i_{l_A}}
\newcommand{\EilB}{E^i_{l_B}}

\newcommand{\Ef}{E^f}
\newcommand{\Efk}{E^f_k}
\newcommand{\EfkA}{E^f_{k_A}}
\newcommand{\EfkB}{E^f_{k_B}}

\newcommand{\zbi}{\mathcal{Z}_{\beta,i}}
\newcommand{\zbAi}{\mathcal{Z}^A_{\beta,i}}
\newcommand{\zbBi}{\mathcal{Z}^B_{\beta,i}}
\newcommand{\zbf}{\mathcal{Z}_{\beta,f}}

\newcommand{\Elk}{E_{l,k}}
\newcommand{\Elbkb}{E_{\textbf{l},\textbf{k}}}

\newcommand{\aAj}{a^A_j}
\newcommand{\aBj}{a^B_j}
\newcommand{\cAj}{c^A_j}
\newcommand{\cBj}{c^B_j}
\newcommand{\atAj}{\tilde{a}^A_{\tilde{j}}}
\newcommand{\atBj}{\tilde{a}^B_{\tilde{j}}}

\newcommand{\D}{\Delta}
\newcommand{\tr}{\text{Tr}}
\newcommand{\G}{\Gamma}
\newcommand{\Gt}{\tilde{\Gamma}}

\newcommand{\lA}{l_A}
\newcommand{\lB}{l_B}
\newcommand{\kA}{k_A}
\newcommand{\kB}{k_B}







\newcommand{\ctl}{\tilde{c}}
\newcommand{\atl}{\tilde{a}}

\newcommand{\lamt}{\tilde{\lambda}}

\newcommand{\lb}{\textbf{l}}
\newcommand{\kb}{\textbf{k}}
\newcommand{\corr}{\Psi}

\newtheorem{theorem}{Theorem}

\theoremstyle{definition}
\newtheorem{definition}[theorem]{Definition}
\AtBeginEnvironment{definition}{\vspace{2pt}\noindent\ignorespaces}

\newtheorem{lemma}[theorem]{Lemma}


\begin{document}

\title{
Resource-resolved quantum fluctuation theorems in end-point measurement scheme
}

\author{Sukrut Mondkar}
\email{sukrutmondkar@gmail.com}
\affiliation{Harish-Chandra Research Institute, A CI of Homi Bhabha National Institute, Chhatnag Road, Jhusi,
Prayagraj (Allahabad) 211019, India}

\author{Sayan Mondal}
\email{sayanmondal96sbs@gmail.com}
\affiliation{Harish-Chandra Research Institute, A CI of Homi Bhabha National Institute, Chhatnag Road, Jhusi,
Prayagraj (Allahabad) 211019, India}

\author{Ujjwal Sen}
\email{ujjwalsen0601@gmail.com}
\affiliation{Harish-Chandra Research Institute, A CI of Homi Bhabha National Institute, Chhatnag Road, Jhusi,
Prayagraj (Allahabad) 211019, India}


\begin{abstract}


Fluctuation theorems provide universal constraints on nonequilibrium energy and entropy fluctuations, making them a natural framework to assess how and to what extent quantum resources become thermodynamically relevant. We develop a unified framework for incorporating a generic quantum resource, including athermality, quantum coherence, and entanglement, into fluctuation theorems. We work within the end-point measurement  scheme, which avoids an initial  energy measurement and allows quantum resources in the initial state to affect nonequilibrium energy statistics. We derive a family of  
quantum fluctuation theorems, including generalized Jarzynski equalities and Crooks-type fluctuation relations, in which corrections decompose into 
resource-resolved contributions. For single systems, we introduce the concept of weight of athermality, and combine it with the weight of coherence to isolate distinct thermodynamic effects of these quantum resources.
For bipartite systems, we furthermore obtain two families of entanglement-resolved fluctuation theorems using an appended correlation operator 
and the best separable approximation, respectively.
Finally, we introduce the concepts of  
 coherence- and entanglement-fluctuation distances, as Kullback–Leibler divergences,  
 which
 quantify the thermodynamic relevance of quantum resources in 
a process-dependent and operational manner.

\end{abstract}

\maketitle


\section{Introduction}

The laws of thermodynamics impose fundamental constraints on the behavior of physical systems, governing energy exchange, irreversibility, and entropy production in macroscopic systems close to equilibrium. Traditionally, these laws are formulated at the ensemble-averaged level, where fluctuations around mean values are negligible~\cite{Reif2016, Landau:1980mil, Callen1991, pathria2016}. 
However, with the advent of experimental platforms capable of manipulating and probing systems at mesoscopic and microscopic scales~\cite{Liphardt2002, PhysRevLett.89.050601, Collin2005, 
PhysRevLett.113.140601, PhysRevX.7.021051, 5lp2-9sps, PhysRevLett.127.180603, Zhang2018, PhysRevResearch.2.023327, HernndezGmez2021, PhysRevA.101.052113}, the role of fluctuations has become unavoidable.
Stochastic thermodynamics provides a unified framework for describing small systems driven far from equilibrium, in which thermodynamic quantities such as work, heat, and entropy production become intrinsically stochastic variables~\cite{Seifert2008, Seifert_2012, Sekimoto2010, PhysRevE.82.011143, RevModPhys.81.1665}. A remarkable outcome of this framework is the discovery of fluctuation theorems (FTs)~\cite{PhysRevLett.74.2694, PhysRevLett.71.2401, JorgeKurchan_1998, Lebowitz1999, Harris_2007, Peliti2021,Seifert_2025}, which place universal constraints on the full probability distributions of these fluctuating quantities, valid arbitrarily far from equilibrium. Such FTs include the celebrated Jarzynski equality~\cite{PhysRevLett.78.2690} and the Crooks fluctuation theorem~\cite{PhysRevE.60.2721}.

Extending stochastic thermodynamics to the quantum regime introduces additional conceptual and operational challenges.
The most widely used framework for defining work and entropy production in quantum systems is based on the two-point measurement (TPM) protocol~\cite{PhysRevE.75.050102, talkner2007, RevModPhys.81.1665,RevModPhys.83.771, 10.1116/5.0079886, Strasberg2022}. In this approach, projective measurements of the system energy are performed at the beginning and at the end of the process, and work is defined as the difference between the measured outcomes. The TPM scheme has enabled the formulation of quantum versions of fluctuation theorems and has provided a consistent operational definition of work compatible with classical limits~\cite{tasaki2000jarzynskirelationsquantumsystems, PhysRevLett.90.170604, Kurchan:2000rzb, RevModPhys.83.771, Baumer2018}.

Despite its success, the TPM protocol has recently come under scrutiny. This protocol fundamentally precludes an assessment of how genuinely quantum resources, such as coherence and entanglement, affect nonequilibrium thermodynamics. 
The initial projective measurement destroys quantum coherence and entanglement present in the system’s initial state, thereby eliminating genuinely quantum features before the dynamics even begins~\cite{RevModPhys.83.771, PhysRevLett.118.070601, Micadei:2019wzk}. As a consequence, the TPM framework is blind to the thermodynamic role of quantum resources.
Deriving fluctuation theorems that remain sensitive to these resources therefore requires frameworks that go beyond TPM while retaining operational meaning. Several frameworks have been proposed in this direction, including formulations based on quasi-probability distributions~\cite{PhysRevLett.120.040602, PRXQuantum.1.010309, PRXQuantum.5.030201, Jae:2025lff, Li:2025kfx}, Bayesian networks~\cite{PhysRevLett.124.090602, PhysRevLett.127.180603} and the end-point measurement (EPM) scheme~\cite{PhysRevA.104.L050203, Hernandez-Gomez:2022xor, Gianani:2022fyp, Artini:2025lqf, 
Jae:2025lff}.

The EPM scheme provides a natural operational setting for deriving quantum fluctuation theorems. By avoiding an initial projective energy measurement and relying instead on virtual initial energies and a single final projective measurement, the EPM protocol allows initial quantum resources to influence the statistics of nonequilibrium energy changes. As a result, EPM offers a flexible framework in which distinct quantum resources can give rise to distinct, physically interpretable corrections to fluctuation relations.

In this work, we develop a unified framework for deriving a zoo of quantum fluctuation theorems that systematically resolve the thermodynamic roles of different quantum resources within the EPM scheme. We obtain four distinct categories of fluctuation theorems. First, for single quantum systems, we revisit EPM-based Jarzynski equalities and Crooks-type detailed fluctuation theorems using the decomposition of an initial coherent state into a thermal part and a {coherence operator}~\cite{PhysRevA.104.L050203,Hernandez-Gomez:2022xor}. Second, we introduce a fully resource-theoretic formulation for single systems based on convex decompositions using the weight of athermality and the weight of coherence, leading to refined fluctuation theorems in which classical uncertainty, athermality, and coherence contributions are cleanly separated. Here, we introduce the weight of athermality as a new operational measure of athermality.
Third, for bipartite quantum systems, we derive fluctuation theorems using a {correlation-operator} decomposition that captures total correlations, including entanglement, beyond product structure. Finally, we present a fully operational and entanglement-resolved family of quantum fluctuation theorems by combining the best separable approximation with local decompositions based on the weight of athermality and the weight of coherence. Together, these four types of FTs, provide a systematic and operational classification of quantum fluctuation theorems, clarifying when and how coherence, athermality, and entanglement modify nonequilibrium energy statistics.  {In Fig.~\ref{fig:1}, we summarize the four families of FTs in a pictorial form.}

Beyond modified fluctuation relations, 
we introduce new measures
that quantify the thermodynamic relevance of coherence and entanglement. These measures capture how strongly a given quantum resource affects 
{non-equilibrium energy fluctuations}, providing a direct operational link between quantum resource theories and non-equilibrium thermodynamics. 

The structure of the paper is as follows. In Sec.~\ref{sec:EPM-review}, we briefly review the {EPM} scheme and the resulting corrections to the Jarzynski equality and entropy fluctuation theorems. In Sec.~\ref{sec:EPM-FT}, we apply our decomposition framework to single-system scenarios and analyze the effects of athermality and coherence. In Sec.~\ref{sec:EPM-FT-biparty}, we extend the analysis to bipartite systems, considering both correlation-based {correlation operator} and best separable decompositions. In Sec.~\ref{sec:new-measures}, we introduce new measures that quantify the thermodynamic impact of coherence and entanglement. Finally, Sec.~\ref{sec:conclusion} summarizes our findings and outlines possible future directions.


\begin{figure*}
    \centering
    \includegraphics[width=0.7\linewidth]{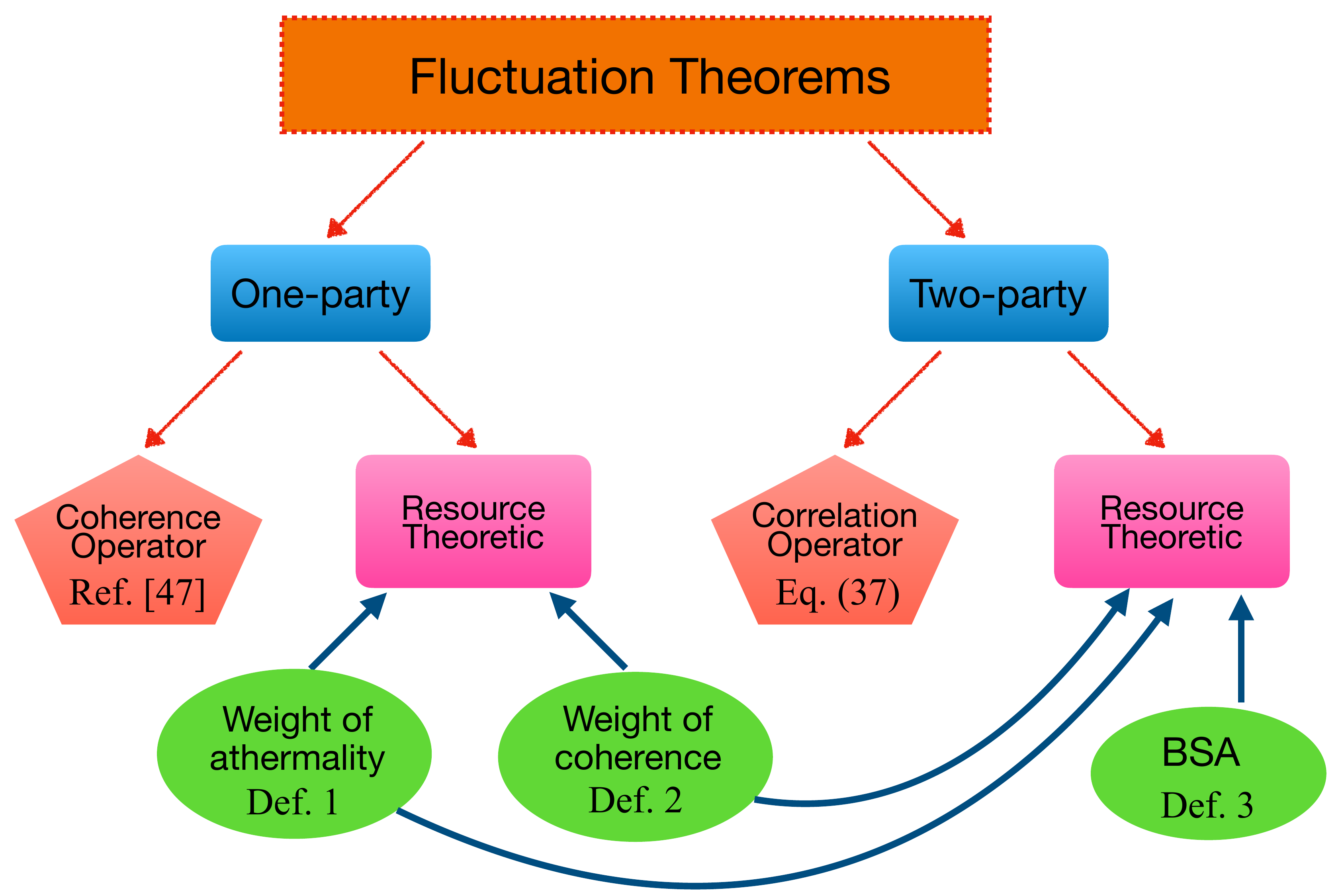}
    \caption{\emph{\textbf{Zoo of fluctuation theorems.}} 
The figure provides a schematic classification of all fluctuation theorems (FTs) derived in this work. We organize the results according to the system structure: one-party (single) versus two-party (bipartite) quantum systems and the manner in which the initial resourceful quantum state is decomposed.
For one-party scenarios, we distinguish between FTs obtained using the previously employed coherence-operator decomposition (orange pentagon), which leads to EPM Jarzynski equality and entropy-production relations derived in Refs.~\cite{PhysRevA.104.L050203, Hernandez-Gomez:2022xor}, and a resource-theoretic decomposition (pink boxes) based on the weight of athermality (Def.~\ref{def:athermality}) and weight of coherence (Def.~\ref{def:coherence}). The latter yields refined FTs in which classical uncertainty, athermality, and coherence contributions are cleanly separated (Eq.~\eqref{eq:Jarzynski-coh}, Eq.~\eqref{eq:DFT-entropy-single-party} and Eq.~\eqref{eq:Theta-Sigma-Final}). 
For two-party case, we first consider a correlation-operator decomposition (orange pentagon), which captures total correlations  
leading to correlation-corrected FTs (Eq.~\eqref{eq:Jarzynski-EnAB-biparty} and Eq.~\eqref{eq:DFT-entropy-bi-party} with Eq.~\eqref{eq:DeltaPsi-residual}). We then introduce a fully operational and resource-theoretic formulation based on the Best Separable Approximation (Def.~\ref{def:BSA}), which yields entanglement-resolved FTs (Eq.~\eqref{eq:Jarzynski-ent} and Eq.~\eqref{eq:DFT-entropy-bi-party} with Eq.~\eqref{eq:Corre-corrections}).
Together, the four branches shown in the diagram constitute a unified framework for systematically isolating and quantifying the thermodynamic roles of coherence, athermality, and entanglement in non-equilibrium quantum fluctuation relations within the EPM protocol.
    }
    \label{fig:1}
\end{figure*}

\section{Review of the End-Point Measurement scheme}\label{sec:EPM-review}

The EPM scheme~\cite{PhysRevA.104.L050203, Hernandez-Gomez:2022xor} provides a framework for analyzing the statistics of non-equilibrium energy fluctuations without requiring a projective energy measurement on the initial quantum state, unlike in the TPM protocol. By avoiding the collapse of the initial state, EPM allows one to analyze how genuine quantum features such as coherence and entanglement influence the statistics of non-equilibrium energy fluctuations.

%

\subsection{The protocol}\label{sec:EPM-rev:subsec:proto}

Let $\sys$ be a $d$-dimensional quantum system whose dynamics over the time interval $[\ti, \tf]$ are described by a one-parameter family of completely positive, trace-preserving (CPTP) maps 
$\Pht$, which evolve an initial state $\rhi$ to a final state $\rhf = \Phtf[\rhi]$. At the final time $t_f$, the system is subjected to a projective measurement in the energy eigenbasis. This maps the initial state $\rhi$ to an outcome $\Pifk$, where $\Pifk := | \Efk \rangle \langle \Efk  |$ is the projection operator corresponding to the $k$-th eigenstate $| \Efk \rangle $ of the final Hamiltonian $\Hf = \sum_k \Efk \Pifk$. In other words, the EPM protocol generates trajectories $\mathcal{T}_k (\rhi):\rhi \rightarrow \Pifk$. We will suppress the ${t}$ subscript in $\Pht$ from here on and represent the dynamical map as $\Phi$. 
Given the initial Hamiltonian $\Hi  = \sum_l \Eil \Piil$, one can identify the spectral probabilities $\tr \left(\rhi \Piil \right)$ 
associated with the state $\rhi$ by preparing an ensemble of identical systems in this state and performing projective energy measurements in the eigenbasis of $\Hi$ on only a subset of them, while leaving the remaining systems in the ensemble undisturbed~\cite{PhysRevA.104.L050203}.
Since no projective energy measurement is carried out at the initial time, the outcomes of the EPM scheme are inherently stochastic with respect to the {virtual} initial energy $\Ei$, that the system would have possessed had such a measurement been performed. Consequently, the energy difference $\D E = \Ef - \Ei$ is treated as a random variable. The uncertainty in $\Ei$ stems from the fact that these values are only {assigned} as outcomes of hypothetical projective measurements, which do not collapse the quantum state. Thus, the actual projective measurement at the final time $\tf$ remains uncorrelated with these virtual initial outcomes, leading to statistical independence of the final projective measurement.

The \SuM{EPM} probability distribution, $\Pc(\D E)$, associated with the random variable $\D E$,  is defined as
\begin{equation}
    \Pc(\D E) := \sum_{l,k} p(\Eil, \Efk ) \delta(\D E - \D \Elk) 
\end{equation}
where $\D \Elk  =\Efk - \Eil$ and $p(\Eil , \Efk )$ is the joint probability of virtually measuring intial state $| \Eil  \rangle$ and measuring final state $| \Efk  \rangle$. The statistical independence of the final projective energy measurements from the initial virtual ones implies that 
\begin{align}\label{eq:pcoh}
    p(\Eil , \Efk ) = \pinl  \pfink  &= \tr(\rhi \Piil) \tr(\Phi[\rhi] \Pifk) 
    =: \pcoh^{l,k}
\end{align}
where $\pinl  := \tr(\rhi \Piil)$ and $\pfink  := \tr(\Phi[\rhi] \Pifk)$ are the probabilities associated with the initial virtual and final energy measurements, and $\pcoh^{l,k}$ is the joint probability distribution satisfying $\sum_{l,k} \pcoh^{l,k} = 1$. 

Crucially, the average energy change obtained from the EPM distribution,
\begin{equation}
    \langle \D E \rangle_{\Pc} := \int d \D E \Pc (\D E) \D E
\end{equation}
exactly reproduces the average energy variation generated by the CPTP map $\Phi$~\cite{PhysRevA.104.L050203},
\begin{equation}
    \langle \D E \rangle = \tr\left(\Hf \rhf \right) - \tr \left( \Hi \rhi \right)
\end{equation}
This agreement is highly non-trivial, as it involves the difference between expectation values of two, generally non-commuting, Hamiltonians evaluated in distinct states, highlighting the physical consistency of the EPM protocol.

Defining a proper distribution of change in energy $P(\Delta E)$ is essential because fluctuation theorems, such as the Crooks and Jarzynski equalities, are fundamental statements about the statistics of energy (or work) and its fluctuations, not merely its average. These relations connect microscopic reversibility to macroscopic irreversibility through the full probability distribution of energy changes, thereby providing a statistical foundation for the second law of thermodynamics. A well-defined distribution $P(\Delta E)$ is thus the cornerstone for extending fluctuation theorems to genuinely quantum regimes, where coherence and {entanglement} can no longer be ignored.

{Information about the statistics of energy fluctuations is conveniently encoded in the characteristic function, $\mathcal{G}(u)$, defined generically as the Fourier transform of energy-change distribution $P (\D E)$
\begin{equation}
    \mathcal{G}(u) := \langle e^{i u \D E} \rangle_{P (\D E)} = \int d \D E e^{i u \D E}  P (\D E)
\end{equation} 
As in classical probability theory, $\mathcal{G}(u)$ compactly stores the full information about the underlying distribution $P (\D E)$: its Fourier inversion yields $P (\D E)$, and its derivatives at $u=0$ yield the moments and cumulants of $\D E$. These features are completely protocol independent.

For the EPM protocol, using the statistical independence of virtual initial and final projective energies, the characteristic function takes the specific factorized form~\cite{PhysRevA.104.L050203}
\begin{equation}\label{eq:EPM-characteristic}
   \mathcal{G}(u) =  \tr \left(\rhi e^{- i u \Hi} \right)  \tr \left( \Phi[\rhi] e^{ i u \Hf}\right)
\end{equation}
%
This factorization is a distinctive structural feature of EPM and will play an important role in analyzing how initial coherence and entanglement modify fluctuation relations.

Beyond being a compact encoding of the distribution, the characteristic function naturally interfaces with fluctuation theorems. A standard route to Crooks and Jarzynski fluctuation theorems is to study symmetry properties of the characteristic function. For processes beginning in a Gibbs state {(at inverse temperature $\B$)}, the forward and backward characteristic functions in the standard TPM setting satisfy the {symmetry} relation~\footnote{The shift $u \rightarrow u + i \beta$ appearing in Eq.~\eqref{eq:Char-symm} is formally analogous to the Kubo–Martin–Schwinger (KMS) condition for thermal equilibrium states, which states that the thermal correlations functions \SuM{of any two operators $A$ and $B$} satisfy $\langle A(t) B \rangle = \langle A(t + i \B) B \rangle  $ encoding analyticity in a strip of width $\B$ in imaginary time. This analogy is purely structural—the KMS shift encodes equilibrium imaginary-time periodicity, while the Crooks symmetry encodes microscopic reversibility—but it provides useful intuition for understanding why fluctuation theorems emerge from analytic properties of $\mathcal{G}(u)$. For background on the KMS condition, see Refs.~\cite{Haag:1967sg, 10.1143/PTPS.64.12}}
\begin{equation}\label{eq:Char-symm}
    G_F(u) = G_B(-u + i \beta),
\end{equation}
as shown in Refs.~\cite{talkner2007, PhysRevE.75.050102}. Fourier transforming this symmetry relation yields Crooks FT, while integrating it gives Jarzynski equality.}

\subsection{Coherence-corrected Jarzynski equality and entropy-production fluctuation theorem}\label{sec:EPM-rev::subsec:FT}

In order to isolate the correction to the Jarzynski equality arising due to the coherence in the initial state, the authors of Ref.~\cite{PhysRevA.104.L050203} parameterized the initial coherent state as $\rhi = \gbi + \chi$
, where $\gbi$ denotes a thermal state at inverse temperature $\B$ with respect to the initial Hamiltonian $\Hi$ and $\chi$ contains only off-diagonal terms in the eigenbasis of $\Hi$. {We refer to $\chi$ as the coherence operator.} The resulting Jarzynski equality is 

\begin{align}\label{eq:Jarzynski}
    \langle e^{-\B (\D E - \D F)} \rangle =  d \Big[ \tr \left( \gbf \Phi [\gbi] \right)  +  \tr \left( \gbf \Phi [\chi] \right) \Big]
\end{align}
$\gbf$ is the thermal state of the final Hamiltonian $\Hf$ at inverse temperature $\B$, with $\D F = - \B^{-1} \ln (\zbf/\zbi)$ being the free energy difference between $\gbi$ and $\gbf$. The partition functions of $\gbi$ and $\gbf$ are $\zbi = \sum_l e^{-\B \Eil}$ and $\zbf = \sum_k e^{-\B \Efk}$, respectively. The Jarzynski equality for the  TPM scheme is $\langle e^{-\B (\D E - \D F)} \rangle =1$~\cite{Rastegin:2013plm}. The deviation of the right-hand side in~\eqref{eq:Jarzynski} from unity is due to the EPM protocol. The first term on the right hand side, $\tr \left( \gbf \Phi [\gbi] \right) $ is due to uncertainty that results from absence of initial state measurement. The second term $ \tr \left( \gbf \Phi [\chi] \right)$ quantifies the deviation due to initial quantum coherence.

The Crooks-type detailed FT for entropy production in the presence of initial state coherence was derived for the EPM scheme using Crook's formalism~\cite{PhysRevE.60.2721} in Ref.~\cite{Hernandez-Gomez:2022xor}. Let the action of the CPTP map $\Phi$ of the forward process on quantum states be represented as
\begin{align}
    \Phi [\bullet] = \sum_\alpha A_\alpha \bullet A_\alpha^\dagger
\end{align}
where $A_\alpha$ are the Kraus operators satisfying $\sum_\alpha A_\alpha^\dagger A_\alpha = \mathbb{I}$. If the map admits a non-singular fixed point $\pi$, such that $\pi = \Phi[\pi]$, one can define a backward (or time-reversed) map $\tilde{\Phi}$~\cite{PhysRevA.77.034101} as
\begin{align}\label{eq:Phi-dual-Kraus}
    \tilde{\Phi}[\bullet]  = \sum_\alpha \tilde{A}_\alpha \bullet \tilde{A}_\alpha^\dagger, \quad \tilde{A} =  \pi^{1/2} A_\alpha^\dagger \pi^{- 1/2}
\end{align}
The backward process is then characterized by the initial state $\rhti$ and the map $\tilde{\Phi}$. The initial states of the forward and backward processes are decomposed as
\begin{align}
    \rhi & = \gbi + \chi \\
    \rhti &=\gbf + \chit
\end{align}
The terms $\chi$ and $\chit$ contain the coherences of the corresponding states in the eigenbasis of $\Hi$ and $\Hf$, respectively. They do not have any diagonal terms. Within the EPM framework, the joint energy-measurement probabilities for the forward and backward processes are respectively
\begin{align}
\PG (l,k) &=  \pinl \pfink \label{PG-EPM}\\
\PGt (k,l)&=\ptink \ptfinl \label{PGt-EPM}
\end{align}
where $\ptink := \tr (\rhti \Pifk  )$ and $\ptfinl  := \tr (\Phtil[\rhti] \Piil )$ and $\pinl $ and $\pfink$ as defined before. The stochastic entropy production associated with the (virtual) trajectory $| \Eil \rangle \rightarrow | \Efk \rangle$ is then defined as $\D \stotlk :=\ln (\PG (l,k)/\PGt (k,l))$~\cite{RevModPhys.83.771, QST-book-Strasberg,Manzano:2021apn}. The detailed FT for entropy production is
\begin{subequations}\label{eq:DFT-EPM}
\begin{align}
   \D \stotlk &
   =  \B (\D E_{l,k} - \D F) + \D \scorlk \label{eq:DFT-EPM-a} \\
   \D \scorlk &=\D \s_{l,k} + \D \SB_{l,k} \label{eq:DFT-EPM-b}
\end{align}
\end{subequations}
The term $ \B (\D \Elk - \D F)$ is the TPM entropy production~\cite{talkner2007}. {The explicit expressions for $\D\s_{l,k}$ and $\D \SB_{l,k}$ are given by
\begin{align}\label{eq:sigma_terms}
     \sfk (\gbi) := \ln & \pfink (\gbi), \quad  \stfl(\gbf) := \ln \tilde{p}^f_l(\gbf) \nonumber\\
     \D \slk &:= \s^f_{k} (\gbi) -  \st^f_l(\gbf)
\end{align}

\begin{align}\label{eq:Sigma_terms_Old}
     \SBfk (\chi) := \text{ln} \left[ 1 + \frac{\pfink(\chi) }{\pfink(\gbi)} \right], \  & \SBtfl (\chit) := \text{ln} \left[ 1 + \frac{\ptfinl(\chit)}{\ptfinl(\gbf)} \right] \nonumber\\
     \D \SB_{l,f} := \SBfk (\chi) &- \SBtfl (\chit)
\end{align}
$\D \scorlk$ is a novel contribution to entropy production owing to the EPM protocol. The terms $\D\s_{l,k}$ and $\D \SB_{l,k}$ are contributions due to classical uncertainty and coherence, respectively. In particular, $\D\s_{l,k}$ shows up because of the classical statistical uncertainty introduced by the lack of initial projective energy measurement in the EPM scheme. The term $\D \SB_{l,k}$ isolates entropy production due to initial quantum coherence and it is termed as {coherence-affected irreversible entropy production}~\cite{Hernandez-Gomez:2022xor}.}

%
Integrating the detailed FT, Eq.~\eqref{eq:DFT-EPM}, with respect to the forward probability distribution $P_\G$ gives the integral FT for the average entropy production
\begin{align}\label{eq:IFT-EPM}
    \Big\langle  e^{-\D \stot} \Big\rangle_\G = 1,
\end{align}
where $\langle \bullet \rangle_\G$ denotes average with respect to the forward probability distribution $P_\G$. The application of Jensen's inequality to integral FT Eq.~\eqref{eq:IFT-EPM} yields $\langle \D \stot \rangle_\Gamma  \geq 0$, which is the statement {of the second law of thermodynamics in this context,} that the average entropy production is non-negative. 

Additionally, $\D \SB$ satisfies its own integral FT
\begin{equation}\label{eq:IFT-EPM-SB}
    \langle e^{- \D \SB} \rangle_\G = 1 \quad \Rightarrow \langle \SB \rangle_\G \geq 0
\end{equation}
Recall that Refs.~\cite{PhysRevA.104.L050203, Hernandez-Gomez:2022xor} decomposed initial coherent state as $\rhi = \gbi + \chi$ where $\gbi$ is the thermal density matrix of $\Hi$ at inverse temperature $\B$ and $\chi$  denotes the coherent part of $\rhi$ (the off-diagonal component in $\Hi$ basis). By construction, the coherence operator $\chi$ is not a physical state: it need not be positive semi-definite, has $\tr \chi = 0$, and cannot be prepared as a standalone density matrix. Since both $\rhi$ and $\gbi$ can be prepared experimentally, in the experiments, the terms in the FTs involving $\chi$ are obtained by expressing $\chi$ as $\chi = \rhi - \gbi$ and using the linearity of trace and CPTP maps. 

Furthermore, this decomposition is only valid for initial coherent states whose populations in the eigenbasis of $\Hi$ are the same as those of $\gbi$. This is a highly restrictive assumption and will fail to hold for most initial coherent states.

To overcome these drawbacks, we propose an alternative that quantifies the effect of coherence and entanglement on energy-fluctuation statistics using only valid, experimentally preparable quantum states. Instead of introducing an auxiliary, unphysical operator $\chi$, we construct state-only, {resource-theoretic decompositions} of $\rhi$. These decompositions express $\rhi$ as a convex combination of a thermal state and a minimal state that carries a well-defined amount of the relevant resource-athermality, coherence, or entanglement. Because every component appearing in the {resource-theoretic decomposition} of $\rhi$ is itself a legitimate density matrix, all the quantities entering the corrected FTs-whether probability ratios in the detailed FT or trace functionals in the Jarzynski equality- are experimentally accessible functions of physical states. In practice, one can prepare each component state and implement the EPM protocol on them individually to obtain the required correction terms, ensuring that the resulting fluctuation relations remain fully operational and experimentally testable.

This state-only approach keeps the analysis experimentally accessible while removing ambiguities associated with $\chi$. It also ensures that each contribution classical, athermal, coherent, or entangled is unambiguously isolated and can be interpreted in {resource-theoretic} terms.


\section{Improved EPM Fluctuation Theorems}\label{sec:EPM-FT}
 
In this section, we derive the improved versions of both the Jarzynski equality and the detailed entropy-production
FT within the EPM framework, valid for general initial states that may exhibit athermality and coherence. 
We replace the ad hoc decomposition $\rho_i=\gamma_{\beta,i}+\chi$
introduced in Refs.~\cite{PhysRevA.104.L050203, Hernandez-Gomez:2022xor} with {resource-theoretic} decompositions based
exclusively on experimentally preparable states. 
For this, we consider two quantities called the {weight of
athermality} (see Def.~\ref{def:athermality} below) and {weight of coherence} (see Def.~\ref{def:coherence} below), which allow us to {uniquely and operationally} isolate the contributions of athermality and coherence to non-equilibrium FTs. 
 
\begin{definition}[\emph{Weight of athermality}]\label{def:athermality}
The weight of athermality of a quantum state $\rho$
with respect to a thermal state $\gamma_{\beta,H}$ is defined as
\[
A_w(\rho)
   = \min_{\tau \in \mathscr{D}(\mathcal{H})}
      \{ a \ge 0 : \rho = (1-a)\gamma_{\beta,H} + a\tau \}.
\]
Here, the thermal state, $\gamma_{\beta,H}$, is defined with respect to the Hamiltonian $H$ at a finite non-zero inverse-temperature $\B$. 
While $\tau$ is chosen from $\mathscr{D}(\mathcal{H})$, which is the set of all density matrices supported on the Hilbert space $\mathcal{H}$.
\end{definition}
By construction, $0 \leq a \leq 1$, and the minimal athermal state $\tau$ captures precisely the excess non-equilibrium content of $\rho$ beyond the thermal baseline $\gamma_{\beta,H}$.
The weight of athermality $a$ and the corresponding minimal athermal state $\tau$ are uniquely defined as
\begin{subequations}
\begin{align}
    a &= 1 - \mu_{\min} \left( (\gamma_{\beta,H})^{\frac{1}{2} }\ \rho \ (\gamma_{\beta,H})^{-\frac{1}{2}} \right), \label{eq:weight_atherm_solns-a}\\
    \tau &= \frac{\rho - (1 - a) \gamma_{\beta,H}}{a}, \label{eq:weight_atherm_solns-b}
\end{align}\label{eq:weight_atherm_solns}
\end{subequations}
where ${\mu}_{\min}(\mathcal{A})$ denotes minimum eigenvalue of Hermitian operator $\mathcal{A}$. The existence and uniqueness of $a$ and $\tau$ is ensured by the full rank of $\gamma_{\beta,H}$. 
A short proof is included in the Appendix~\ref{app:weight_atherm}.

\begin{definition}[\emph{Weight of coherence}~\cite{PhysRevA.97.032342, PhysRevA.102.032406}]\label{def:coherence}
Let $\mathscr{I}$ be the set of states diagonal in a fixed reference basis $\mathfrak{B}$. The weight of coherence (with respect to the basis $\mathfrak{B}$) of a coherent quantum state $\rho$ is defined as
\[
C_w(\rho) 
= \min_{\substack{\{ \sigma, \tau \} }} \{ c \geq 0 : \rho = (1 - c)\sigma + c \tau, \sigma \in \mathscr{I}, \tau \in \mathscr{D}(\mathcal{H}) \}.
\]
The state $\tau$ is optimally chosen from $\mathscr{D}(\mathcal{H})$ and has coherence with respect to $\mathfrak{B}$. 
\end{definition}
%

\textit{Remark.} The weight of coherence introduced above is not an isolated construction, but rather a special case of a broader class of weight-based resource quantifiers defined via Best Free Approximations (BFA) in general convex quantum resource theories~\cite{PhysRevA.102.032406}. In this framework, a given quantum state is decomposed into a convex combination of a free state and a minimal resourceful state, with the corresponding weight quantifying the amount of the resource present. From this perspective, the weight of coherence corresponds to the BFA associated with the set of incoherent states.
Consequently, the fluctuation theorems derived below for isolating the impact of coherence on non-equilibrium energy and entropy fluctuations admit a natural and systematic generalization. By replacing the incoherent free set with any other choice of free states and employing the corresponding BFA, one can construct fluctuation theorems that isolate the thermodynamic contribution of arbitrary quantum resources within the EPM framework. This highlights the generality of our approach and establishes EPM-based fluctuation relations as a versatile operational probe of quantum resources in non-equilibrium thermodynamics.



\subsection{Jarzynski equality}
To derive the Jarzynski equality in the presence of initial coherence, we first express the initial state as a convex combination of a thermal state and a minimal athermal state using the weight of athermality,
\begin{align}\label{eq:coh-atherm-decomp}
    \rhi &= (1 - a)\gbi + a \tau 
\end{align}
where $a$ is the {weight of athermality} of $\rhi$ with respect to $\gbi$
and $\tau$ is the corresponding minimal athermal state. We assume that  $\gbi$ is full-rank.
Substituting this decomposition into the EPM characteristic function, Eq.~\eqref{eq:EPM-characteristic}, allows us to separate the equilibrium and athermal contributions to the Jarzynski equality. The EPM Jarzynski equality reads
\begin{align}\label{eq:Jarzynski-atherm}
    &\Big\langle e^{-\B (\D E - \D F)}\Big \rangle = \nonumber \\ & \left\{ (1-a) d + a  \tr ( (\gbi)^{-1} \tau ) \right\}  \nonumber \\ 
      &\left \{ (1-a) \tr ( \gbf \Phi[\gbi] ) 
    + a \tr (\gbf \Phi [\tau] ) \right \}
\end{align}
The term $d(1-a)^2\tr ( \gbf \Phi[\gbi] )$ is the thermal contribution, while the rest of the terms are due to athermality.
{For the initial thermal state, we set $a = 0$, and obtain the same expression as in the original EPM protocol of Refs.~\cite{PhysRevA.104.L050203, Hernandez-Gomez:2022xor} (discussed in Sec.~\ref{sec:EPM-review}) decomposition with no coherence in the initial state, viz. Eq.~\eqref{eq:Jarzynski} with $\chi = 0$. 
}

We then further decompose $\tau$ via the {weight of coherence} (with respect to the eigenbasis of $\Hi$) into an incoherent state and a minimally coherent state,
\begin{align}\label{eq:tau-coh-decomp}
    \tau &= (1 -c ) \taud + c \tauc
\end{align}
where $\taud$ is diagonal in the $\Hi$ eigenbasis and $\tauc$ contains the minimal amount of coherence compatible with $\tau$.

Substituting Eqs.~\eqref{eq:coh-atherm-decomp} and \eqref{eq:tau-coh-decomp} into the expression for EPM characteristic function, Eq.~\eqref{eq:EPM-characteristic}, 
the EPM Jarzynski equality becomes
\begin{align}\label{eq:Jarzynski-coh}
    &\langle e^{-\B \left( \D E  - \D F\right)} \rangle = \nonumber \\
    &\Big \{ (1 - a)d + a (1 - c)  \tr((\gbi)^{-1} \taud)  + a c  \tr ((\gbi)^{-1}  \tauc)\Big \} \nonumber \\
    &\Big \{  (1 - a) \tr (\gbf \Phi[\gbi])  + a (1 - c) \tr (\gbf \Phi[\taud]) \nonumber \\
    &+ a c \tr (\gbf \Phi[\tauc]) \Big \}
\end{align}
{The terms 
$  \tr((\gbi)^{-1} \taud)$ and $ \tr (\gbf \Phi[\taud])$ quantify the corrections due to diagonal athermality encoded in $\taud$, while the terms 
$  \tr ((\gbi)^{-1}  \tauc)$ and $ \tr (\gbf \Phi[\tauc]) $ isolate the contribution of genuine coherence (encoded in $\tauc$).} When $c=0$, the expression reduces to the athermality-corrected Jarzynski equality in Eq.~\eqref{eq:Jarzynski-atherm}. 
\subsection{Entropy production}
We now derive the detailed entropy-production FT in the presence of coherence. For the forward and backward processes, we denote the forward and backward initial states as $\rhi$ and $\rhti$, respectively. They can be decomposed into their respective athermal and coherence contributions in the following way, 
\begin{subequations}
    \begin{align}
        \rhi &= (1 - a)\gbi + a (1-c) \taud + a c \tau_c  \label{state-convexMix-fwd}\\
        \rhti &= (1 - \atl)\gbf + \atl (1-\ctl) \tautd + \atl \ctl \tautc \label{state-convexMix-bkd}
    \end{align}
    \end{subequations}
    where $(a,c,\tau_d,\tau_c)$ and $(\tilde{a},\tilde{c},\tilde{\tau}_d,\tilde{\tau}_c)$ are the athermality and coherence parameters and
corresponding minimal states of the forward and backward initial states, respectively. Throughout this section, we assume that the thermal states $\gamma_{\beta,i}$ and $\gamma_{\beta,f}$ are full-rank.
Before deriving the FT, let us define 
\begin{align*}
    \pinl(\mathcal{A}) &\coloneq \tr(\mathcal{A} \; \Pi^i_l ), \quad 
    \pfink(\mathcal{A}) \coloneq \tr(\Phi(\mathcal{A}) \; \Pifk ), \\
   \ptink(\mathcal{A}) &\coloneq \tr(\mathcal{A} \; \Pifk ), \quad
   \ptfinl (\mathcal{A}) \coloneq \tr \left( \Phtil (\mathcal{A}) \; \Piil   \right),
\end{align*}
with $\mathcal{A}$ a generic linear operator.

As seen from Eq.~\eqref{PG-EPM}, the probability distribution of the forward trajectory is $\PG =  p_l^i(\rho_i) \pfink(\rho_i)$. Let us look at the structure of these two probability distributions.
\begin{align}
  p_l^i(\rho_i) 
    &= p_l^i(\gamma_{\B,i}) e^{\Thil(\taud)}e^{\SBil(\tauc)} 
\end{align}
Similarly, we have 
\begin{align}
     \pfink(\rho_i) = p_k^f(\gamma_{\B,i}) e^{\Thfk(\taud)}e^{\SBfk(\tauc)}.
\end{align}
{The terms $p_l^i(\gamma_{\B,i})$ and $p_k^f(\gamma_{\B,i})$ are the thermal contributions to the forward trajectory.  The terms $e^{\Thil(\taud)}$ and $e^{\Thfk(\taud)}$ are the contributions due to \SuM{diagonal} athermalilty, whereas the terms $e^{\SBil(\tauc)} $ and $e^{\SBfk(\tauc)}$ are due to coherence.} The explicit expressions of $\Thil(\taud)$, $\Thfk(\taud)$, $\SBil(\tauc)$ and $\SBfk(\taud)$ 
are given in Appendix~\ref{app:Crooks:subsec:coh}.

Thus, the probability distribution of the forward trajectory comes out to be,
\begin{align}
\label{For-prob}
    \PG (l,k) &=  \pinl(\rho_i) \pfink(\rho_i) \nonumber \\
    &= p_l^i(\gamma_{\B,i}) p_k^f(\gamma_{\B,i}) e^{\Theta_{l,k}^\G(\taud)}e^{ \Sigma_{l, k}^\G(\tauc)}.
\end{align}
Similarly, for the reverse trajectory with initial state $\tilde \rho_i$ as given in Eq.~\eqref{state-convexMix-bkd}, we have
\begin{align}
\label{Rev-prob}
     \SuM{\PGt (k,l)} &= \tilde p_k^i(\gamma_{\B,f}) \tilde p_l^f(\gamma_{\B,f}) e^{\Theta_{l, k}^{\Gt}(\tautd)}e^{ \Sigma_{l,k}^{\Gt}(\tautc)},
\end{align}
where we have defined the new quantities \SuM{$\mathcal{M}_{l, k}^\G$ and $\mathcal{M}_{l, k}^{\Gt}$}, with $\mathcal{M} = \{\Theta, \Sigma\}$, by collecting the contributions arising from both the initial virtual measurement and the final projective measurement, as follows,
\SuM{
\begin{align*}
    \mathcal{M}_{l,k}^\G = \mathcal{M}_l^{(i)} + \mathcal{M}_k^{(f)}\\
     \mathcal{M}_{l,k}^{\Gt} = \tilde{\mathcal{M}}_k^{(i)} + \tilde{\mathcal{M}}_l^{(f)}.
\end{align*}
}
The explicit expressions of 
$\Thtfl(\tautd)$, $\Thtik(\tautd)$, $\SBtfl(\tautc)$ and $\SBtik(\tautc)$ are given in Appendix~\ref{app:Crooks:subsec:coh}.

\SuM{The trajectory-level entropy-production is defined as} $\D s_\text{tot}^{l,k} := \ln (\PG(l,k)/\SuM{\PGt(k,l)})$. From Eqs.~\eqref{For-prob}~and~\eqref{Rev-prob}, we get the  detailed entropy-production \SuM{FT} as,
\begin{equation}
\boxed{
\begin{aligned}
\D \stotlk &= \beta\,(\Delta \Elk - \Delta F) + \D \scorlk, \\[4pt]
\D \scorlk &= \D \slk + \D \Thlk + \D \SBlk 
\end{aligned}
}
\label{eq:DFT-entropy-single-party}
\end{equation}
where 
\SuM{$\D \scorlk$ is the correction term due to athermality and coherence on top of the usual expression derived via TPM}~\cite{talkner2007} . The correction term is composed of three contributions - $\D \slk $ \SuM{(defined earlier in Eq.~\eqref{eq:sigma_terms})} captures classical uncertainty (no initial projective measurement), $\D \Thlk$ isolates athermality, and $\D \SBlk$ isolates coherence. \SuM{The explicit expressions of $\D \Thlk$ and $\D \SBlk $ are provided in App.~\ref{app:Crooks:subsec:coh}.}
In general, each of $\D \Thlk$ and $\D \SBlk $ consists of two pieces,
\begin{subequations}\label{eq:Theta-Sigma-Final}
\begin{align}
    \D \Thlk &=  \Thlk^\G(\taud) -  \Theta_{l,k}^{\Gt}(\tautd) \label{eq:Theta-Final} \\
    \D \SBlk &=  \SBlk^\G(\tauc) -  \Sigma_{l,k}^{\Gt}(\tautc) \label{eq:Sigma-Final} 
\end{align}
\end{subequations}

The integral entropy-production FT follows from the detailed FT, Eq.~\eqref{eq:DFT-entropy-single-party}, by averaging over the forward probability distribution $P_\G$, and takes the same form as Eq.~\eqref{eq:IFT-EPM}, viz. $\langle e^{- \D \stot}\rangle_\G = 1$. Application of Jensen's inequality gives $\langle \D \stot\rangle_\G \geq 0$. By substituting $\D \stot$ from Eq.~\eqref{eq:DFT-entropy-single-party} we get
\begin{equation}
    \B \left(\langle \D E \rangle - \D F \right) + \left( \langle \D \s \rangle  + \langle \D \Th \rangle + \langle \D \SB \rangle  \right)  \geq 0
    \label{eq:II-law-EPM-coh}
\end{equation}
%
Recall that the quantity $\D \stot^{\rm TPM}:=\B \left(\langle \D E \rangle - \D F \right)$ is the standard TPM average entropy production. Subsequently, Eq.~\eqref{eq:II-law-EPM-coh} can be written as
\begin{equation}
    \D \stot^{\rm TPM} + \left( \langle \D \s \rangle  + \langle \D \Th \rangle + \langle \D \SB \rangle  \right)  \geq 0
\end{equation}
As a result, Eq.~\eqref{eq:II-law-EPM-coh} is a 
{refined} statement of the 
{second law of thermodynamics}~\cite{Esposito2010, PhysRevE.99.012120, jiang2018improved,  Bera2017, PhysRevA.110.012451, mondal2023modified, Aimet2025} in this context.

When $c = \tilde{c} = 0$
the coherence corrections vanish, $\Delta\Sigma_{l,k} = 0$, and we recover the athermality-only \SuM{entropy production}, 
whereas
for $a = \tilde{a} = 0$ both $\Delta\Theta_{l,k}$ and $\Delta\Sigma_{l,k}$ vanish and the standard TPM entropy production is
retrieved. This completes the characterization of coherence-induced modifications to the EPM FTs. The derivations of coherence-corrected FTs are given in Apps.~\eqref{app:Jarzynski:subsec:coh} and~\eqref{app:Crooks:subsec:coh}.

\section{EPM Fluctuation Theorems for bipartite quantum systems}\label{sec:EPM-FT-biparty}

Non-equilibrium thermodynamic processes often involve multipartite quantum systems in which both classical correlations and entanglement can influence the statistics of energy changes. The EPM protocol discussed in Sec.~\ref{sec:EPM-rev:subsec:proto} employs a {global} projective energy measurement at the final time~$\tf$, and therefore cannot reveal how entanglement between subsystems contributes to FTs. To capture such effects, the protocol must be generalized to the bipartite setting while retaining the defining features of the single-system EPM scheme namely, the absence of an initial projective measurement, the virtual assignment of initial energies, and projective energy measurements performed only at the final time, now carried out {locally} on each subsystem. 

\subsection{EPM Protocol for bipartite quantum systems}\label{sec:EPM-FT-biparty::subsec:biparty-protocol}

To generalize the EPM protocol to bipartite systems without altering its measurement structure (virtual initial energies, no initial collapse, final projective measurements only), we allow the two subsystems to interact only during the time interval $(\ti,\tf)$ while performing both the virtual initial energy assignments and the final projective energy measurements {locally}. Concretely, the system begins at $\ti$ with a non-interacting Hamiltonian $\HAi + \HBi$, ensuring that the virtual initial energies $(\EilA, \EilB)$ 
are sampled with respect to the local eigen-projectors of $\HAi$ and $\HBi$. 
The interaction $\Hint(t)$  is then switched on during the evolution governed by a global CPTP map $\Phi$, and is switched off again at $\tf$, at which point independent local projective energy measurements in the eigenbases of $\HAf$ and $\HBf$ are performed. Switching off the interaction at $\tf$ ensures that the final energy is measured with respect to local Hamiltonians $\HAf$ and $\HBf$, preserving compatibility with the EPM measurement structure. This construction preserves the measurement structure of the original EPM scheme,
while allowing initial correlations, including entanglement, to influence the joint statistics through the dynamics.

With these local measurements, the EPM joint distribution becomes
\begin{align}\label{eq:pcoh}
    \pent^{\lb,\kb} &\coloneqq p(\EilA, \EilB; \EfkA, \EfkB) \nonumber \\ &= \pinlb(\EilA, \EilB) \pfinkb( \EfkA, \EfkB)
\end{align}
where $\pinlb := \tr(\rhi \Pii_{\lA} \otimes \Pii_{\lB})$ and $\pfinkb:= \tr(\Phi[\rhi] \PifkA \otimes \PifkB)$ are the probabilities associated with the initial virtual and final local energy measurements and $\pent^{\lb,\kb}$ is joint probability distribution satisfying $\sum_{\lb , \kb} \pent^{\lb , \kb} = 1$; $\lb \equiv \{ \lA, \lB \}$, $\kb \equiv \{ \kA, \kB  \}$. As in the single-system EPM scheme, this is not a genuine joint measurement probability but the product of statistically independent initial virtual and final projective statistics. $\PiilA := |\EilA \rangle \langle \EilA |$, $\PiilB := | \EilB \rangle \langle \EilB |$ are the projectors onto the eigenstates of the initial local Hamiltonians $\HAi$ and $\HBi$, respectively. Similarly, the projectors $\PifkA$ and $\PifkB$ are defined that project onto the eigenstates of the final local Hamiltonians. 
These probabilities define the bipartite EPM distribution of energy changes,
\begin{align}
    \Pe(\D E)  \coloneq  \sum_{\lb , \kb}&  \pent^{\lb,\kb}  \delta(\D E - \D E_{\lb,\kb} )  
\end{align}
where $\D E_{\lb,\kb} = \EfkA + \EfkB - \EilA - \EilB$.
Because the initial virtual energies are sampled locally but the evolution is global, this distribution carries signatures of classical correlations and entanglement present in the initial state. In this section, we isolate the specific contribution of entanglement to these energy-change statistics and to the resulting FTs.


We now present two different 
and operationally motivated ways of decomposing the initial bipartite state, each leading to a distinct family of entanglement-corrected FTs. The expression for \SuM{trajectory-level} entropy production in both cases takes the following unified form

\begin{equation}
\boxed{
\begin{aligned}
\D \stotlbkb &= \beta\,(\Delta \Elbkb - \Delta F) + \D \scorlbkb, \\[4pt]
\D \scorlbkb &= \D \slbkb + \D  \corr_{\textbf{l},\textbf{k}}
\end{aligned}
}
\label{eq:DFT-entropy-bi-party}
\end{equation}
where $\D \slbkb $ captures classical uncertainty (no initial projective measurement) as before, whereas $\D \corr_{\textbf{l},\textbf{k}}$ captures the effect of total correlations. In the following subsections, we evaluate the contributions from correlations explicitly.

\subsection{Fluctuation Theorems: decomposition I}

We first consider an operationally simple way of separating local thermodynamics from nonlocal correlations, which we refer to as decomposition~I.
%
%
In this approach, the initial bipartite state $\rhi$ is written as
\begin{equation}\label{eq:ent-ini-forward-ver1}
    \rhi = \rhAi \otimes \rhBi + \EnAB,
\end{equation}
where 
\begin{subequations}
\begin{align}
    \rhAi &= \gbAi := e^{-\B \HAi}/\zbAi, \\
    \rhBi &=\gbBi :=  e^{-\B \HBi}/\zbBi,
\end{align}
\end{subequations}
are local thermal states of $\HAi$ and $\HBi$ at inverse temperature $\B$, with partition functions $\zbAi$ and $\zbBi$, respectively. 
The operator $\EnAB$ contains all nonlocal correlations and is defined such that
\begin{equation}
    \tr_A \left( \EnAB\right) = \tr_B \left( \EnAB\right) = 0
\end{equation}
Thus, even when $\rhi$ is entangled, its local marginals remain thermal. Note, however, that $\EnAB$ is not a density matrix 
and may include both classical and quantum correlations. It cannot be prepared as a standalone quantum state. We refer to $\EnAB$ as the correlation operator.

Using Eq.~\eqref{eq:ent-ini-forward-ver1} in the EPM characteristic function and following the same steps as in Sec.~\eqref{sec:EPM-FT}, the EPM Jarzynski equality becomes
\begin{align}\label{eq:Jarzynski-EnAB-biparty}
    &\langle e^{-\B (\D E  - \D F)} \rangle = \nonumber \\
    &\left \{d +  \tr ( (\gbi)^{-1} \EnAB) \right \} \nonumber \\
    &\Big \{ \tr(\gbf \Phi[\gbi]) 
     + \tr(\gbf \Phi[\EnAB] ) \Big \}
\end{align}
where 
\begin{align}
    \gbi = \gbAi \otimes \gbBi, \quad \text{and} \quad \gbf = \gbAf \otimes \gbBf.
\end{align}
The terms other than $d \, \tr(\gbf \Phi[\gbi]) $ that involve $\EnAB$ originate from non-local correlations in $\rhi$.
%
%
For the detailed entropy-production FT, we decompose the initial states
of the forward and backward processes as

\begin{align}
    \rhi &= \gbAi \otimes \gbBi + \EnAB \\
    \rhti &= \gbAf \otimes \gbBf + \EntAB 
\end{align}
and derive detailed FT for entropy production in the unified FT structure of Eq.~\eqref{eq:DFT-entropy-bi-party}. The derivation is presented in App.~\ref{app:Crooks:subsec:biparty-EnAB}. The classical uncertainty contributions are

\begin{subequations}\label{eq:classical-unceratain-detailedFT-biparty}
    \begin{align}
        \sfkb &:= \ln \pfinkb(\gbi) \\
        \stflb &:= \ln \ptfinlb (\gbf)
    \end{align}
\end{subequations}
with $\D \slbkb := \sfkb - \stflb$. 

The contribution of correlation, as discussed in Eq.~\eqref{eq:DFT-entropy-bi-party} turns out to be
\begin{equation}
\label{eq:DeltaPsi-residual}
    {\D \corr_{\textbf{l},\textbf{k}} :=  \Psi^{\G}_{\lb, \kb}(\EnAB) -  \Psi^{\Gt}_{\lb, \kb}(\EntAB).}
\end{equation}
Here we have \SuM{$\Psi^{\G}_{\lb, \kb} \coloneqq \Psiilb + \Psifkb$ and  $\Psi^{\Gt}_{\lb, \kb} \coloneqq \Psitikb + \Psitflb$.}
The explicit expressions of these terms are as follows,
 {
\begin{align}
\Psiilb(\EnAB) &\coloneqq   \ln\left( 1 +   \frac{\pinlb(\EnAB) }{\pinlb(\gbi) } \right) \nonumber\\
\Psitikb(\EntAB) &\coloneqq  \ln\left( 1 +   \frac{\ptinkb (\EntAB)}{ \ptinkb(\gbf) }  \right) \nonumber\\
\Psifkb(\EnAB) &\coloneqq  \ln\left( 1 +  \frac{\pfinkb (\EnAB)}{\pfinkb (\gbi)}  \right) \nonumber\\
\Psitflb(\EntAB) &\coloneqq  \ln\left( 1  +  \frac{\ptfinlb (\EntAB)}{\ptfinlb(\gbf)} \right) 
 \end{align}
 }
Although algebraically simple, decomposition~I has significant conceptual limitations:

\begin{enumerate}

\item \emph{$\EnAB$ is not a physical state.} It is neither positive semidefinite nor normalized and cannot be prepared experimentally.
As in the $\chi$-decomposition for coherence in Sec.~\ref{sec:EPM-review}, terms involving $\EnAB$ must be reconstructed indirectly \SuM{from $\rhi$ and $\gbi$} using linearity of the trace and of the CPTP maps.

\item  \emph{Not entanglement-specific.}  $\EnAB$ may contain both classical and quantum correlations. Therefore, the correction $\D \Xilbkb$ does not quantify entanglement alone.

\item  \emph{Not universally applicable.}  A general bipartite state does not always admit a decomposition of the form $\rhi = \rhAi \otimes \rhBi + \EnAB$, particularly when its marginals are not thermal.

\end{enumerate}

These limitations motivate a state-only, resource-theoretic approach that isolates entanglement in a physically meaningful way.

We now develop such an approach using the Best Separable Approximation (BSA).

\subsection{Fluctuation Theorems: decomposition II}\label{sec:EPM-FT-biparty::subsec:decomp-II-BSA}

To overcome the drawbacks of decomposition I, we adopt a fully operational and resource-theoretic decomposition based on the {Best Separable Approximation} (BSA).
\begin{definition}[\emph{Best Separable Approximation}~\cite{PhysRevLett.80.2261}]\label{def:BSA}For any bipartite entangled state $\rho$, the BSA expresses it as 
\begin{equation}\label{eq:BSA}
    \rho =  \lambda \rhE + (1-\lambda) \rhS,
\end{equation}
where $0 \leq \lambda \leq 1$ is the BSA entanglement weight, uniquely defined, $\rhE \in \mathscr{D}(\mathcal{H})$ is an entangled state (the minimal entangled
component), and $\rhS$ is a separable state, expressible as
\begin{equation}\label{eq:BSA-seprable}
    \rhS = \sum_j \rj \rhAj \otimes \rhBj, \quad 0\leq \rj \leq 1,
\end{equation}
where $\rhAj \in \mathscr{D}(\mathcal{H}_A)$ and $\rhBj \in \mathscr{D}(\mathcal{H}_B)$. 
\end{definition}
%
\SuM{The BSA decomposition of Def.~\ref{def:BSA} exists for any bipartite mixed state in arbitrary $M \times N$ dimensions and is unique~\cite{Karnas:2000orh}. The weight of the entangled component, $\lambda$, defines a valid entanglement monotone; in the special case of two-qubit states, $\lambda$ coincides with the concurrence~\cite{PhysRevA.64.052302}. This operational interpretation of $\lambda$ makes the BSA particularly suitable for defining entanglement-induced corrections to fluctuation relations.}

Crucially, every component in Eqs.~\eqref{eq:BSA}-\eqref{eq:BSA-seprable} is a valid,
preparable density matrix, ensuring that all correction terms in the resulting FTs are experimentally accessible.

In order to find the contributions of correlations in the FTs, we express the initial state of the EPM protocol in the BSA form as follows,
\begin{align}\label{eq:ent-ini-forward-BSA}
    \rhi &=  \lambda \rhE + (1-\lambda) \rhS ,
\end{align}
with $\rhS$ given by Eq.~\eqref{eq:BSA-seprable}.
Each local state in $\rhS$ is further decomposed using the weight of athermality and the weight of coherence introduced in Sec.~\ref{sec:EPM-FT}
\begin{align}\label{eq:ent-ini-forward-BSA-local-state-decomp}
    \rhAj &= (1 - \aAj) \gbAi + \aAj (1 - \cAj) \tauAdj + \aAj \cAj \tauAcj, \nonumber 
    \\
     \rhBj &= (1 - \aBj) \gbBi + \aBj (1 - \cBj) \tauBdj + \aBj \cBj \tauBcj 
\end{align}
where $a_j^{A(B)}$, $c_j^{A(B)}$ are, respectively, the weights of
athermality and coherence of $\rho_j^{A(B)}$ with respect to $\gamma_{\beta, i}^{A(B)}$, and $\tau_{d,j}^{A(B)}$ and $\tau_{c,j}^{A(B)}$ are
the corresponding minimal diagonal-athermal and minimal coherent states.
For each $j$ in the sum, $\rhAj \otimes \rhBj$ is a sum of nine terms which we rewrite compactly as 
%
\begin{align}
    \rhAj \otimes \rhBj &=  (1 - \aAj) (1 - \aBj) \gbi + \rhdj + \rhcj, \label{eq:sepBSA-prod}
\end{align}
where $\gbi = \gbAi \otimes \gbBi$ is the global thermal state of initial Hamiltonian $H(t_i) = \HAi + \HBi$ at inverse temperature $\B$, $\rhdj$ collects all diagonal-athermality terms, and $\rhcj$ collects all coherence terms.
The explicit expressions for $\rhdj$ and $\rhcj$ are provided in Eq.~\eqref{eq:BSA-product-decomposition-explicit-expr} of Appendix~\ref{app:Jarzynski:subsec:biparty-BSA}. \SuM{Each of these operators is positive, and the corresponding normalized states are obtained by dividing by their trace, so the entire decomposition is in terms of legitimate density matrices.}

Using the BSA decomposition~\eqref{eq:ent-ini-forward-BSA} together with the local
decompositions~\eqref{eq:sepBSA-prod} in the
EPM characteristic function, the EPM Jarzynski equality becomes
\begin{align}\label{eq:Jarzynski-ent}
    \langle e^{-\beta (\Delta E - \Delta F)} \rangle = \mathcal{J}^{(i)} \mathcal{J}^{(f)},
\end{align}
where
\begin{align} 
     & \mathcal{J}^{(i)} = \nonumber \\
     &  \lambda  \tr \left( \left( \gbi \right)^{-1} \rhE \right) +  (1 - \lambda) \Big \{ \sum_j \rj \Big( (1 - \aAj) (1 - \aBj) d  \nonumber\\
      &+  \tr \left( \left( \gbi \right)^{-1} \rhdj \right)  + \tr  \left( \left( \gbi \right)^{-1} \rhcj \right)  \Big) \Big \}   \nonumber\\
     & \mathcal{J}^{(f)} = \nonumber \\
     & \Big \{ \lambda \tr \left( \gbf \Phi [\rhE] \right) + \nonumber \\
     &(1 - \lambda) \Big \{  \sum_j \rj \Big( (1 - \aAj) (1 - \aBj) \tr \left( \gbf \Phi [\gbi] \right)\nonumber\\ 
     &  + \tr \left( \gbf  \Phi [\rhdj] \right)  + \tr \left( \gbf  \Phi [\rhcj] \right)  \Big)  \Big \} \Big \} \nonumber
\end{align}

The terms involving $\rhE$ constitute an entanglement correction to the Jarzynski equality. The terms involving the sum $\sum_j r_j$ are corrections due to classical correlations. The classical correlation corrections can be further decomposed into \SuM{local} thermal, diagonal athermality, and coherence corrections. In this way, the BSA-based decomposition cleanly disentangles entanglement from local quantum resources and classical correlations.

We now derive the detailed entropy-production FT via BSA decomposition of the initial state. 
For the forward and backward processes, we BSA-decompose the initial states as,
\begin{align}\label{eq:BSA-ini-forward-and-back}
    \rhi &=  \lambda \rhE + (1-\lambda) \rhS , \nonumber \\
   \rhti &=  \lamt \rhtE + (1-\lamt) \rhtS , 
\end{align}
where the separable parts $\rhS$ and $\rhtS$ are decomposed \SuM{further} into the \SuM{local} thermal, athermal, and coherent states as discussed earlier in Eq.~\eqref{eq:BSA-seprable}~and~\eqref{eq:sepBSA-prod}. \SuM{As before,} we represent all the terms corresponding to the backward trajectory with a tilde, in order to distinguish them from the forward trajectory.
We derive the unified detailed entropy production FT of Eq.~\eqref{eq:DFT-entropy-bi-party} for the BSA-decomposed initial states of Eqs.~\eqref{eq:BSA-ini-forward-and-back}. We find that the correction due to correlations can be decomposed into two terms, viz. 
\SuM{
\begin{align}\label{eq:Corre-corrections}
    \D \corr_{\textbf{l},\textbf{k}} &= \D \Lamlbkb + \D \Xilbkb, \\
    \D \Lamlbkb &=  \Lambda^{\G}_{\lb,\kb} - \Lambda^{\Gt}_{\lb,\kb}, \nonumber\\
    \D \Xilbkb &= \Xi^{\G}_{\lb,\kb} - \Xi^{\Gt}_{\lb,\kb}. \nonumber
\end{align}
}
The first term, $\D \Lamlbkb$ isolates classical correlations, whereas the second term $\D \Xilbkb$ isolates entanglement. \SuM{Note that we have,
\begin{subequations}
\begin{align}
    \Lambda^{\G}_{\lb,\kb} &= \Lamilb + \Lamfkb, \quad \Lambda^{\Gt}_{\lb,\kb} = \Lamtikb + \Lamtflb, \\
    \Xi^{\G}_{\lb,\kb} &= \Xiilb + \Xifkb, \quad \Xi^{\Gt}_{\lb,\kb} = \Xitikb + \Xitflb
\end{align}
\end{subequations}
}

Expressions for \SuM{$ \Lamilb$, $\Lamtikb$, $\Xiilb$, and $\Xitikb$} involve the probabilities associated with virtual initial \SuM{local} energy measurements, whereas \SuM{$ \Lamfkb$, $\Lamtflb$, $\Xifkb$, and $\Xitflb$} involve probabilities associated with final \SuM{local} energy measurements. 

Let us first present the contributions due to classical correlations, 
\begin{align}
    \Lamilb &\coloneqq \ln \left( \sum_j \rj \SSilbj \right), \ \
    \Lamtikb \coloneqq  \ln \left(\sum_{\jt} \rtj \SStikbj \right), \nonumber \\
    \Lamfkb &\coloneqq  \ln \left(\sum_j \rj \SSfkbj \right), \ \
    \Lamtflb \coloneqq \ln \left( \sum_{\jt} \rtj \SStflbj \right).
\end{align}
Here $\SSilbj$, $\SStikbj$, $\SSfkbj$, and  $\SStflbj$ are individual contributions of the product states from the decompositions of the maximal separable component $\rhS$, Eq.~\eqref{eq:BSA-seprable}, and simlar one for $\rhtS$
due to the initial virtual and final measurements in the forward and backward trajectories. We present their explicit forms in Appendix~\ref{app:Crooks:subsec:biparty-BSA}. 

Now, let us present the explicit form of terms due to entanglement. 
The explicit expressions of these quantities are as follows,
    \begin{align}
        \Xiilb &:= (1 - \lambda) + \lambda \frac{\pinlb(\rhE)}{\pinlb(\gbi) \Lamilb} \nonumber\\
        \Xitikb &:= (1 - \lamt) + \lamt \frac{\ptinkb(\rhtE)}{\ptinkb(\gbf) \Lamtikb} \nonumber\\
        \Xifkb &:= (1 - \lambda) + \lambda \frac{\pfinkb(\rhE)}{\pfinkb(\gbi) \Lamfkb} \nonumber\\
        \Xitflb &:= (1 - \lamt) + \lamt \frac{\ptfinlb(\rhtE)}{\ptfinlb(\gbf) \Lamtflb}. 
    \end{align}

In summary, decomposition~II provides a fully operational, state-restricted refinement of the EPM FTs that cleanly separates classical uncertainty ($\D \sigma$), classical correlations with local quantum resources ($\D\Lambda$), and entanglement ($\D \Xi$). Unlike decomposition~I, each correction is expressed in terms of experimentally preparable states, allowing direct experimental reconstruction. \SuM{The derivations of FTs presented in this section can be found in Apps.~\ref{app:Jarzynski:subsec:biparty-BSA} and~\ref{app:Crooks:subsec:biparty-BSA}.}

\section{New measures of coherence and entanglement contributions to energy fluctuations}\label{sec:new-measures}

In the previous sections, we derived coherence- and entanglement-corrected fluctuation theorems \SuM{employing resource-theoretic initial-state decompositions}.
Importantly, these decompositions demonstrate that coherence and entanglement can be cleanly separated from classical and thermal contributions {at the level of individual EPM trajectories} $\{l,k\}$ (or $\{\lb,\kb\}$).

While such {trajectory-level} formulations provide detailed insights, many practical applications call for a {single scalar quantity} that quantifies the overall impact of coherence or entanglement on EPM energy fluctuations for a given initial state and driving protocol. In this section, we therefore introduce two operationally meaningful measures: the {coherence fluctuation distance} (CFD)  and the {entanglement fluctuation distance} (EFD). The former quantifies how strongly initial quantum coherence modifies EPM trajectory statistics, while the latter captures the corresponding influence of entanglement in bipartite systems.

Both quantities are defined as Kullback–Leibler (KL) divergences between trajectory probability distributions generated by experimentally preparable initial states. As a result, they inherit a clear operational interpretation and remain fully consistent with the EPM framework.


\subsection{Coherence fluctuation distance}\label{sec:new-measures::subsec-CFD}

To quantify how strongly initial quantum coherence influences the statistics of energy changes in the EPM protocol, we introduce the coherence fluctuation distance (CFD). Let $\mathscr{I}$ denote the set of states diagonal in the eigenbasis of the initial Hamiltonian $\Hi$. For any $\rhi^{\mathcal{I}} \in \mathscr{I}$, the EPM distribution factorizes as
\begin{equation}
    \pcoh^{l,k} = \pinl(\rhi^{\mathcal{I}}) \pfink(\Phi[\rhi^{\mathcal{I}}]),
\end{equation}
and therefore provides a natural incoherent reference against which the statistics generated by a coherent state may be compared.

For a given coherent initial state $\rhi$, a CPTP map $\Phi$ and Hamiltonian $H(t)$, we define the CFD as
\begin{equation}\label{eq:CFT-defn}
    \mathfrak{D}_c (\rhi) := \underset{\rhi^{\mathcal{I}} \in \mathscr{I}}{\min} D_{KL} (\pcoh^{l,k}(\rhi)  || \pcoh^{l,k}(\rhi^{\mathcal{I}}))
\end{equation}
where $D_{KL} (\pcoh^{l,k}(\rhi)  || \pcoh^{l,k}(\rhi^{\mathcal{I}}))$ is the Kullback-Leibler (KL)-divergence between the two probability distributions $\pcoh^{l,k}(\rhi) $ and $\pcoh^{l,k}(\rhi^{\mathcal{I}})$ which are EPM probability distributions associated with initial states $\rhi$ and $\rhi^\mathcal{I}$, respectively. 
The KL-divergence between two probability distribution $p_i$ and $q_i$ is given as,
\begin{align}\label{eq:DKL-pEPM-coh-n-I}
    D_{KL}(p_i\|q_i) = \sum_i p_i \ln\frac{p_i}{q_i}.
\end{align}
%
The minimization in Eq.~\eqref{eq:CFT-defn} runs over {all incoherent states with 
{respect to the eigenbasis} of $\Hi$}.
Thus, for a given process, $\mathfrak{D}_c (\rhi)$ measures the least statistical distinguishability between the EPM distribution generated from $\rhi$ and that of any incoherent state. 
Note that along with the initial coherent state $\rhi$, $\mathfrak{D}_c(\rhi)$ also depends on the CPTP map $\Phi$ and the Hamiltonian $H(t)$, but 
we suppress \SuM{the dependence on the latter two} in our notation for brevity.

 
The 
CFD
quantifies the thermodynamic relevance of coherence, that is if the EPM trajectory distribution produced by $\rhi$ is nearly indistinguishable from that of an incoherent state, then coherence does not {noticeably} modify \SuM{non-equilibrium} energy fluctuations even if it is present in the state. Conversely, large values of CFD certify that coherence \SuM{significantly} biases EPM trajectories relative to their incoherent counterpart, producing measurable deviations in energy change statistics. Note that
\begin{equation}
    \mathfrak{D}_c (\rhi) = 0 \Longleftrightarrow    \pcoh^{l,k} (\rhi) = \pcoh^{l,k}(\rhi^{\mathcal{I}}) , 
\end{equation}
where $\rhi^{\mathcal{I}}$ is some diagonal state.
A vanishing value therefore signifies that coherence in $\rhi$ is thermodynamically silent, that is it does not influence the energy-trajectory statistics under the given dynamics. This can occur when coherence is present but dynamically irrelevant, for instance when neither $H(t)$ nor $\Phi$ couples diagonal and off-diagonal sectors. \SaM{The theorem~\ref{them:1} stated below formalizes this scenario for a single qubit system.} On the other hand, larger values certify that coherence biases non-equilibrium {fluctuations} in such a way that cannot be masked by any incoherent surrogate state.

Evaluating the minimization in Eq.~\eqref{eq:CFT-defn} may not always be analytically tractable. We therefore provide two rigorous upper bounds, one of which we illustrate to be tight and very close to $\mathfrak{D}_c (\rhi)$ itself for single qubit states.
Let $\Delta [\rhi]$ denote the state obtained by acting the dephasing (in the eigenbasis of $\Hi$) channel on the initial coherent state $\rhi$. As a result $\Delta [\rhi]$ is diagonal in the eigenbasis of initial Hamiltonian $H(t_i)$.  We show that
\begin{equation}\label{eq:CFD-bounds}
    \mathfrak{D}_c (\rhi) \leq D_{KL} (\pcoh^{l,k}(\rhi) || \pcoh^{l,k}(\Delta [\rhi])) \leq 2 C_{re}(\rhi)
\end{equation}
%
where $C_{re}(\rhi)$ is the relative entropy of coherence~\cite{PhysRevLett.113.140401}
of state $\rhi$ defined as 
\begin{equation}\label{eq:Cre-defn}
    C_{re}(\rhi) := S(\Delta [\rhi]) - S(\rhi)
\end{equation}
where $S$ is the von-Neumann entropy. 
The proof of~\eqref{eq:CFD-bounds} is provided in Appendix~\ref{app:sec:upper-bounds::subsec:CFD-proof}.

In Eq.~\eqref{eq:CFD-bounds}, we observe that $\mathfrak{D}_c (\rhi) $ is upper bounded by $2C_{re}(\rhi)$, which is a valid measure of coherence of quantum state $\rhi$. Here, we note that $C_{re}(\rhi)$ depends on the initial coherent state $\rhi$, but is independent of $\Phi$ and $H(t)$ of the EPM protocol. As a result, the protocol-independent upper bound provided by $2C_{re}(\rhi) $ is rather weaker. 
On the other hand, the quantity $D_{KL} (\pcoh^{l,k}(\rhi) || \pcoh^{l,k}(\Delta [\rhi]))$ which depends on $\rhi$, $\Phi$, and $H(t)$ provides a tighter bound. Below we illustrate this with an example where we show that the value of the upper bound $D_{KL} (\pcoh^{l,k}(\rhi) || \pcoh^{l,k}(\Delta [\rhi]))$ can be very close to $\mathfrak{D}_c (\rhi)$ for single qubit states.

{As discussed above, the coherence fluctuation distance $\mathfrak{D}_c (\rhi) $ is process dependent: in general, it depends on the dynamics $\Phi$ and on the driving Hamiltonian $H(t)$. It is therefore useful to identify broad and physically relevant classes of open-system dynamics for which $\mathfrak{D}_c (\rhi) $ becomes trivial. For single-qubit systems, an important and widely used class is given by phase-covariant evolutions~\cite{Filippov:2019zdz, Holevo96Coh}, i.e., evolutions that commute with phase rotations generated by the (instantaneous) Hamiltonian. A key structural consequence of phase covariance is a decoupling between populations and coherences in the energy basis: the population dynamics is insensitive to initial coherences, and coherences do not feed into populations. In the EPM protocol, this directly constrains the possible dependence of the joint endpoint statistics $\pcoh^{l,k}$ on initial coherence and leads to a vanishing coherence fluctuation distance.}

{
\begin{definition}[\emph{Phase-covariant qubit dynamics}~\cite{Filippov:2019zdz}]
Without loss of generality, let us fix the reference (energy) basis as the eigenbasis of $\sigma_z$. A qubit dynamical map $\Phi$ is called {phase covariant} (with respect to $\sigma_z$) if it commutes with all phase rotations about the $z$-axis, i.e.,
\begin{equation}
e^{-i\phi \sigma_z}\Phi[\rho]e^{i\phi \sigma_z} =
\Phi\left[e^{-i\phi \sigma_z}\rho e^{i\phi \sigma_z}\right]
\qquad \forall \phi\in\mathbb{R}.
\end{equation}
Equivalently, in the Bloch-vector representation the evolution preserves the block structure between the transverse $(x,y)$ components and the longitudinal $z$ component: the $(x,y)$ components may undergo a rotation and a (generally contractive) linear transformation within the transverse plane, while the $z$ component evolves independently, with no mixing between the transverse and longitudinal sectors.
\end{definition}
}
{
In the EPM protocol, the relevant reference basis is the instantaneous energy eigenbasis of the system Hamiltonian. For a qubit, any Hamiltonian can be written as

\begin{equation}
    H(t) = \varepsilon_0 \mathbb{I}_2 + \frac{\omega(t)}{2} \textbf{n}(t). \boldsymbol{\sigma}
\end{equation}
with $\textbf{n}(t)$ a unit Bloch vector. Phase covariance with respect to the Hamiltonian means covariance under the one-parameter group generated by $H(t)$ , i.e., commutation with phase rotations about the instantaneous energy axis. Requiring this covariance to hold for all times $t$ singles out the case where the energy axis is fixed, $\textbf{n}(t) = \textbf{n}$, while the gap $\omega(t)$ may vary in time. Hence, up to an irrelevant energy shift, the most general qubit Hamiltonian compatible with phase-covariant dynamics can be taken as
\begin{equation}
    H(t) = \varepsilon_0 \mathbb{I}_2 + \frac{\omega(t)}{2} \sigma_z
\end{equation}
after choosing the energy eigenbasis so that $\textbf{n} = \hat{\textbf{z}}$. Therefore, working in the $\sigma_z$ eigenbasis is exactly the same as working in the instantaneous energy basis required by the EPM measurements.

\begin{theorem}~\label{them:1}
    The coherence fluctuation distance, $\mathfrak{D}_c (\rhi)$, identically vanishes 
    for a qubit system undergoing phase-covariant dynamics in the eigenbasis of the (time-dependent) Hamiltonian (taken diagonal in that basis).
\end{theorem}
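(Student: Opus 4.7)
The plan is to exhibit an explicit incoherent state $\rho_i^{\mathcal{I}} \in \mathscr{I}$ whose forward EPM joint distribution coincides pointwise with $\pcoh^{l,k}(\rhi)$. Since the Kullback--Leibler divergence is non-negative and vanishes iff the two distributions agree, this will force the minimum in Eq.~\eqref{eq:CFT-defn} to be exactly zero. The natural candidate is the dephased state $\rho_i^{\mathcal{I}} = \Delta[\rhi]$, obtained from $\rhi$ by discarding the off-diagonal entries in the $\sigma_z$ eigenbasis; it is manifestly in $\mathscr{I}$. I would then verify that both factors of the EPM product $\pcoh^{l,k} = \pinl(\rhi)\,\pfink(\Phi[\rhi])$ are insensitive to the replacement $\rhi \mapsto \Delta[\rhi]$.

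For the initial factor, I parameterize a qubit state as $\rho = \tfrac{1}{2}(\mathbb{I} + \vec{r}\cdot\vec{\sigma})$ with Bloch vector $\vec{r} = (r_x, r_y, r_z)$. As discussed in the preceding paragraph, the restricted form $H(t) = \varepsilon_0 \mathbb{I}_2 + \tfrac{\omega(t)}{2}\sigma_z$ implies that both $\Hi$ and $\Hf$ are diagonal in the time-independent $\sigma_z$ eigenbasis, so $\Pi^i_l$ and $\Pi^f_k$ are $\sigma_z$ eigenprojectors. Consequently, the initial virtual probability $\pinl(\rhi) = \tr(\rhi\,\Pi^i_l)$ depends only on $r_z$. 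Since $\Delta[\rhi]$ shares the populations, and hence the Bloch $r_z$, of $\rhi$, the identity $\pinl(\rhi) = \pinl(\Delta[\rhi])$ is immediate.

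For the final factor, I invoke phase covariance: by definition $\Phi$ commutes with phase rotations generated by $\sigma_z$, so in Bloch form it preserves the block structure between the transverse sector $(r_x, r_y)$ and the longitudinal sector $r_z$, with no mixing. It follows that the $z$-component of the Bloch vector of $\Phi[\rhi]$ is a function of $r_z$ alone, independent of $(r_x, r_y)$. Since the final projectors are again $\sigma_z$ eigenprojectors, $\pfink(\Phi[\rhi]) = \tr(\Phi[\rhi]\,\Pi^f_k)$ depends only on this $z$-component, and hence only on the $r_z$ of the input. Applying the same reasoning to $\Delta[\rhi]$ gives $\pfink(\Phi[\rhi]) = \pfink(\Phi[\Delta[\rhi]])$. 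Combining the two equalities yields $\pcoh^{l,k}(\rhi) = \pcoh^{l,k}(\Delta[\rhi])$ trajectory by trajectory, so $D_{\mathrm{KL}}(\pcoh^{l,k}(\rhi) \| \pcoh^{l,k}(\Delta[\rhi])) = 0$, which by non-negativity of $D_{\mathrm{KL}}$ forces $\mathfrak{D}_c(\rhi) = 0$.

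I do not anticipate a serious obstacle, since phase covariance is precisely designed to enforce the relevant block decoupling and the EPM endpoint statistics for this qubit protocol reads off only the longitudinal Bloch component at both times. The only step requiring some care is making the population--coherence decoupling explicit at the level of endpoint probabilities, which can be done either via the Bloch parameterization above or, equivalently, through the structure of phase-covariant Kraus operators, none of which can convert populations into coherences or vice versa.
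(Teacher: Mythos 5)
Your proposal is correct and follows essentially the same route as the paper: choose the dephased state $\Delta[\rhi]$ as the incoherent surrogate, note that the initial virtual statistics depend only on the populations (which $\Delta[\rhi]$ shares with $\rhi$), and use the population--coherence decoupling of phase-covariant qubit dynamics to conclude that the final energy statistics also coincide, whence the KL divergence vanishes and so does $\mathfrak{D}_c(\rhi)$. The only cosmetic difference is that you phrase the decoupling in Bloch-vector language while the paper works directly with the density-matrix entries.
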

\begin{proof}
We parameterize the initial coherent single-qubit state and its dephased (diagonal) counterpart in the $\sigma_z$ eigenbasis as
\begin{equation}\label{eq:rho-coh-single-qubit}
    \rhi = \begin{pmatrix}
a & \gamma \\
\gamma & 1-a 
\end{pmatrix}, \quad \gamma^2 \leq a (1-a), \; \; 0 \leq a \leq 1
\end{equation}
and
\begin{equation}\label{eq:rho-dephased-single-qubit}
    \Delta[\rhi] = \begin{pmatrix}
a & 0 \\
0 & 1-a 
\end{pmatrix}
\end{equation}
Since $\rhi$ and $ \Delta[\rhi]$ have identical diagonal entries in the energy basis, the initial virtual energy outcome distribution $\pinl$ is same for both the states.

Now, assume the dynamics $\Phi$ is phase covariant with respect to this energy basis. {Then, by the structural property of phase-covariant qubit channels, the evolved {$z$–component} (hence the diagonal part, i.e., the populations in the energy basis) depends only on the initial $z$–component and is independent of the initial {$x-y$ components} (coherences).} Equivalently, phase covariance implies that the diagonal elements of $\Phi[\rhi]$ in the $\sigma_z$ basis depend only on the diagonal elements of $\rhi$.

Therefore, at any time $t$, the diagonal parts of $\Phi[\rhi]$ and $\Phi[\Delta[\rhi]]$ coincide, and hence the final energy distribution $\pfink$ is the same for $\rhi$ and $\Delta[\rhi]$. Consequently, the EPM joint distribution $\pcoh^{l,k} = \pinl \pfink$ is identical for $\rhi$ and $\Delta[\rhi]$, for any choice of final time $\tf$, which implies
\begin{equation*}
    D_{KL} (\pcoh^{l,k}(\rhi) || \pcoh^{l,k}(\Delta [\rhi])) = 0
\end{equation*}
Using Eq.~\eqref{eq:CFD-bounds}, we conclude thats $\mathfrak{D}_c (\rhi) = 0$.
\end{proof}
Theorem~\eqref{them:1} implies that within the broad class of dynamics described by phase-covariant maps, initial coherence is thermodynamically irrelevant at the level of EPM trajectory statistics~\cite{PhysRevA.96.032109}.
}


\textit{Illustration 1.} Let us 
now compute all three quantities in Eq.~\eqref{eq:CFD-bounds} for a general single-qubit coherent state $\rhi$ and for particular choices of $H(t)$, and $\Phi$ and demonstrate that $D_{KL} (\pcoh^{l,k}(\rhi) || \pcoh^{l,k}(\Delta [\rhi])) $ indeed provides a strong upper bound for $\mathfrak{D}_c (\rhi)$.

Let the initial single qubit coherent state is 
\begin{equation}\label{eq:ini-coh}
    \rhi = \begin{pmatrix}
a & \gamma \\
\gamma & 1-a 
\end{pmatrix}
\end{equation}
where $\gamma^2 \leq a(1-a) $ is the coherence and $0\leq a \leq 1$.

First, let us consider only unitary dynamics for the system. In this case, the non-equilibrium statistics of  change in energy become non-equilibrium work statistics.

We consider the following time-dependent Hamiltonian
\begin{align}\label{eq:Ham-CFD-Unitary}
    H(t) = \frac{\Omega}{2} \left( \sin(\omega t) \sigma_x + \cos(\omega t) \sigma_z \right).
\end{align}
The three quantities in Eq.~\eqref{eq:CFD-bounds} are plotted in Fig.~\ref{fig:CFD-unitary} for this case for a representative set of parameter values. This numerical analysis validates the upper bounds presented in Eq.~\eqref{eq:CFD-bounds}.


{\textit{Illustration 2.}} Now, let us consider the open dynamics of the single qubit system. For this case the statistics of non-equilibrium energy fluctuations have contributions from both work and heat. We consider the dynamics described by Lindblad master equation
\begin{equation}\label{eq:master-eqn}
    \frac{d \rho}{d t} = - \mathrm{i} [H, \rho] + \kappa \Big( L \rho L^\dagger - \frac{1}{2} \{ L^\dagger L, \rho \} \Big)
\end{equation}
with a time-dependent Hamiltonian given by Eq.~\eqref{eq:Ham-CFD-Unitary} and Lindblad operator $L = \sigma_x$. The three quantities in Eq.~\eqref{eq:CFD-bounds} are plotted in Fig.~\ref{fig:CFD-dissipative} for this case for a representative set of parameter values. In both the Figs.~\ref{fig:CFD-unitary} and \ref{fig:CFD-dissipative}, we observe that CFD is a monotonically increasing function of coherence $|\gamma|$ and vanishes for incoherent initial states. 

\begin{figure}
    \centering
    \includegraphics[width=0.8\linewidth]{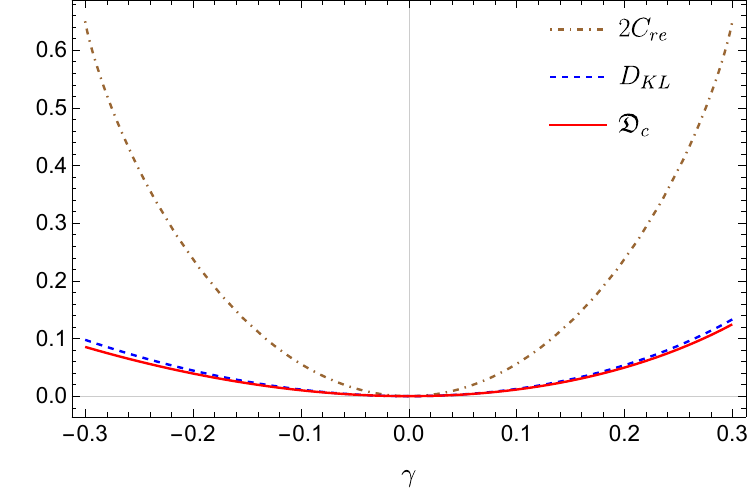}
    \caption{\textbf{\textit{Coherence fluctuation distance and its upper bounds for single-qubit initial states: unitary dynamics.} }
    The single qubit initial state is given by Eq.~\eqref{eq:ini-coh}, parametrized by coherence strength $\gamma$. The coherence fluctuation distance $\mathfrak{D}_c (\rhi)$ (solid red) is plotted as a function of $\gamma$ together with its two upper bounds from Eq.~\eqref{eq:CFD-bounds}: the intermediate bound (blue dashed), and the protocol-independent maximal bound $2 C_{re}(\rhi)$ (brown dot–dashed). The time evolution is unitary under the time-dependent Hamiltonian of Eq.~\eqref{eq:Ham-CFD-Unitary}. Parameters values used are  $\ti = 0$, $\tf = 10$, $a = 0.9$, $\Omega=1$, $\omega=1$.
    }
    \label{fig:CFD-unitary}
\end{figure}


\begin{figure}
  \centering
    \includegraphics[width=0.8\linewidth]{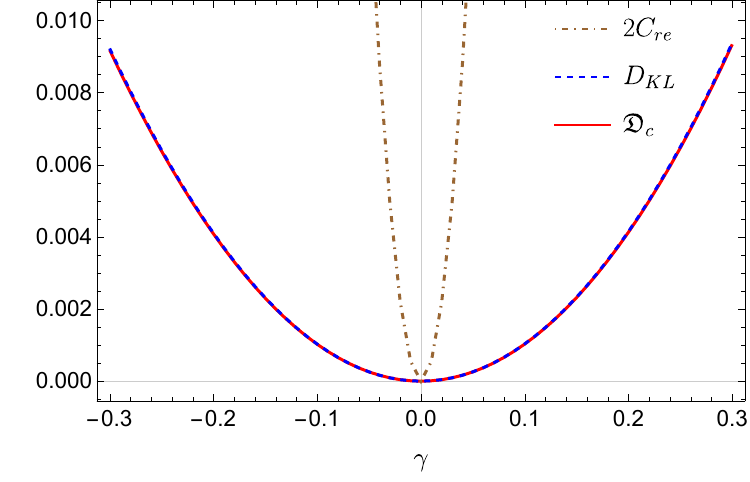}
  \caption{\textbf{\textit{Coherence fluctuation distance and its upper bounds for single-qubit initial states: dissipative dynamics.} }
  The single qubit initial state is given by Eq.~\eqref{eq:ini-coh}, parametrized by coherence strength $\gamma$. The coherence fluctuation distance $\mathfrak{D}_c (\rhi)$ (solid red) is plotted as a function of $\gamma$ together with its two upper bounds from Eq.~\eqref{eq:CFD-bounds}: the intermediate bound (blue dashed), and the protocol-independent maximal bound $2 C_{re}(\rhi)$ (brown dot–dashed). The time evolution is governed by Lindblad master equation, Eq.~\eqref{eq:master-eqn}, with the time-dependent Hamiltonian given by Eq.~\eqref{eq:Ham-CFD-Unitary} and Linblad operator $L = \sigma_x$. Parameters values used are  $\ti = 0$, $\tf = 10$, $a = 0.9$, $\Omega=1$, $\omega=1$, $\kappa = 0.1$.
  }
  \label{fig:CFD-dissipative}
\end{figure}



\subsection{Entanglement Fluctuation Distance}\label{sec:new-measures::subsec-EFD}

In Sec.~\ref{sec:new-measures::subsec-CFD} we introduced the CFD, which quantifies the minimal statistical distinguishability between the joint EPM energy trajectory distributions generated by a coherent initial state and those produced by any incoherent state. We now construct the entanglement analogue of this quantity, applicable to bipartite systems undergoing the EPM protocol.

Let $\mathscr{S}$ denote the set of separable bipartite states in the full Hilbert space $\mathcal{H}=\mathcal{H}_A \otimes \mathcal{H}_B$ of a $d$-dimensional bipartite quantum system $\sys$. For a fixed CPTP map $\Phi$ and bipartite Hamiltonian $H(t)$, the EPM distribution generated by an entangled initial state $\rhi$ will, in general, differ from those achievable using separable states. This motivates the following definition
\begin{equation}\label{eq:EFD-defn}
  \mathfrak{D}_E (\rhi) := \underset{\sigma \in \mathscr{S}}{\min} D_{KL} ( \pent^{\lb,\kb}(\rhi)|| \pent^{\lb,\kb}(\sigma))
\end{equation}
where $D_{KL}$ is the KL-divergence defined in Eq.~\eqref{eq:DKL-pEPM-coh-n-I}. The minimization over $\sigma \in \mathscr{S}$ ensures that $ \mathfrak{D}_E (\rhi) $ quantifies only the {entanglement–induced} contribution to the EPM statistics:
\begin{align}
    \mathfrak{D}_E (\rhi) =0  \Leftrightarrow \exists \sigma \in \mathscr{S} \text{ such that }  \pent^{\lb, \kb} (\rhi) = \pent^{\lb, \kb}(\sigma),
\end{align}
i.e., the entangled and separable processes become indistinguishable at the level of energy trajectories. Conversely, $\mathfrak{D}_E (\rhi) > 0$ implies that no separable initial state reproduces the same nonequilibrium energy statistics, thereby showing that entanglement has a thermodynamically active signature for this protocol.

Although $\mathfrak{D}_E(\rho_i)$ is defined at the level of total distribution, the bipartite fluctuation theorem shows that entanglement enters entropy production at the level of individual trajectories through the correction term $\D \Xilbkb$. The latter captures how entanglement affects entropy production in each single realization, while the former quantifies how much the overall EPM statistics deviate from those generated by separable states. Together, they describe complementary aspects of entanglement-driven nonequilibrium behavior.

The minimization in Eq.~\eqref{eq:EFD-defn} is difficult to evaluate in general. However, two useful families of upper bounds follow by comparison to separable reference states. The first is obtained from the separable component $\rhS$ of the BSA decomposition (Eq.~\eqref{eq:ent-ini-forward-BSA}), yielding
\begin{equation}\label{eq:EFD-upper-bound-1}
  \mathfrak{D}_E (\rhi) \leq D_{KL}(\pent^{\lb, \kb} (\rhi) || \pent^{\lb, \kb} (\rhS)) \leq 2 D(\rhi || \rhS)
\end{equation}
A second bound involves the relative entropy of entanglement of the entangled initial state $\rhi$, defined as
\begin{equation}
    D_E(\rhi):= \underset{\sigma \in \mathscr{S}}{\min} D(\rhi || \sigma)
\end{equation}
where $D(\cdot || \cdot)$ is the quantum relative entropy. Let $\rhstar$ denote the optimizer of $D_E(\rhi)$. Then
\begin{equation}\label{eq:EFD-upper-bound-2}
   \mathfrak{D}_E (\rhi) \leq D_{KL}(\pent^{\lb, \kb} (\rhi) || \pent^{\lb, \kb} (\rhstar)) \leq 2 D_E (\rhi)
\end{equation}
Together, Eqs.~\eqref{eq:EFD-defn}, \eqref{eq:EFD-upper-bound-1}, and \eqref{eq:EFD-upper-bound-2} establish an operational hierarchy linking the statistical impact of entanglement in energy–fluctuation experiments to standard entanglement monotones.
The proof of~\eqref{eq:EFD-upper-bound-1} and ~\eqref{eq:EFD-upper-bound-2} is provided in Appendix~\ref{app:sec:upper-bounds::subsec:CFD-proof}.

 To summarize this section, we have defined quantities based on the KL-divergence, that measure the contribution of coherence and entanglement {to non-equilibrium thermodynamic observables of the system.}

\section{Conclusions}\label{sec:conclusion}

In this work we developed a fully state-only, operational reformulation of fluctuation theorems within the  EPM framework.
This is in contrast to
previous EPM analyses, which relied on decompositions of the initial state into a thermal component and an auxiliary operator (coherence operator $\chi$)n which is not a physical state, not uniquely defined, and cannot be prepared experimentally. 
Our approach 
introduces 
experimentally preparable density matrices obtained through resource-theoretic decompositions of the initial state. Every element entering the corrected fluctuation relations is therefore a function of genuine quantum states, making all thermodynamic contributions explicitly accessible in an experiment.

For single systems, we introduced a decomposition of the initial state into a thermal part, a diagonal-athermal part, and a minimally coherent part, using the weight of athermality and weight of coherence. While for bipartite systems, we showed that nonlocal thermodynamic contributions can be disentangled using a fully state-restricted decomposition based on the Best Separable Approximation (BSA). This yields a refined structure for entropy production in which classical uncertainty, athermality,  coherence, classical correlations, and entanglement appear as separately identifiable and operationally meaningful contributions. The resulting Jarzynski equality and Crooks-type detailed fluctuation theorem cleanly resolve the influence of each quantum resource on the trajectory statistics of energy changes.

Building on these refined fluctuation relations, we introduced two new measures, 
the coherence fluctuation distance  and the entanglement fluctuation distance, defined as KL-divergences between EPM trajectory distributions generated by physical initial states. These quantify 
how strongly coherence or entanglement bias the statistics of energy changes relative to their optimal resource-free surrogates. 
%

Several directions naturally extend from the present work. First, the resource-resolved EPM framework can be combined with measurement-based feedback control, where intermediate measurements feed information into subsequent system evolution. 
Second, our bipartite analysis invites multipartite extensions, including EPM fluctuation relations on networks or chains of quantum systems with initial correlations distributed across different partitions. 
The state-only nature of all decompositions makes the present framework well suited for experimental implementation. 
The resulting EPM trajectory distributions can be measured on superconducting qubits, trapped ions, or other controllable platforms. These possibilities suggest that resource-resolved quantum fluctuation theorems may soon be testable in laboratory settings.

\acknowledgements
We acknowledge HPC facility of HRI. 
The research of S. Mondal was supported by the INFOSYS scholarship.

\appendix

\section{Weight of Athermality}\label{app:weight_atherm}


In this Appendix, we justify the explicit expressions in
Eqs.~\eqref{eq:weight_atherm_solns-a}and~\eqref{eq:weight_atherm_solns-b} for the weight of athermality and the
associated minimal athermal state. Throughout, we work
in finite dimensions and assume that the reference thermal
state $\gamma_{\beta,H}$ is full rank (which holds for any
finite, non-zero inverse temperature~$\beta$).

Recall Def.~\ref{def:athermality}: for a given state $\rho$ the weight of
athermality with respect to $\gamma_{\beta,H}$ is defined as
\begin{equation}
  A_w(\rho) := \min_{\tau\in\mathscr{D}(\mathcal H)}
  \{ a \ge 0 : \rho = (1-a)\,\gamma_{\beta,H} + a\,\tau \},
\end{equation}
where $\mathscr{D}(\mathcal H)$ denotes the set of density
operators on $\mathcal H$. In the main text, we stated that
the optimal weight $a$ and the corresponding minimal
athermal state $\tau$ are uniquely given by
\begin{subequations}
\begin{align}
  a &= 1 - \mu_{\min}\!\left(
      \gamma_{\beta,H}^{1/2}\,\rho\,\gamma_{\beta,H}^{-1/2}
      \right), \label{eq:Aw-appendix-a} \\
  \tau &= \frac{\rho - (1-a)\,\gamma_{\beta,H}}{a},
  \label{eq:Aw-appendix-tau}
\end{align}
\end{subequations}
where $\mu_{\min}(A)$ denotes the minimum eigenvalue
of a Hermitian operator $A$. We now prove existence,
optimality, and uniqueness of this decomposition.

\begin{lemma}
Let $\gamma_{\beta,H}$ be a full-rank thermal state. Then,
for every state $\rho$ there exists a unique $a\in[0,1]$ and
a unique $\tau\in\mathscr{D}(\mathcal H)$ such that
\begin{equation}
  \rho = (1-a)\,\gamma_{\beta,H} + a\,\tau,
\end{equation}
and the pair $(a,\tau)$ is given by
Eqs.~\eqref{eq:Aw-appendix-a}~and~\eqref{eq:Aw-appendix-tau}.
Moreover, $a$ is the minimal value compatible with such a
decomposition.
\end{lemma}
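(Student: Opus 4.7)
The plan is to convert the constrained minimization defining $A_w(\rho)$ into a single operator inequality and then into a scalar spectral condition, exploiting the invertibility of $\gamma_{\beta,H}$.

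First, I would \emph{parametrize} the decomposition. For any $a\in(0,1]$, define $\tau(a):=\bigl(\rho-(1-a)\,\gamma_{\beta,H}\bigr)/a$. Hermiticity and unit trace of $\tau(a)$ are automatic from those of $\rho$ and $\gamma_{\beta,H}$, so the only nontrivial requirement for $\tau(a)\in\mathscr{D}(\mathcal H)$ is positive semi-definiteness, which is equivalent to the operator inequality $\rho\ge(1-a)\,\gamma_{\beta,H}$. Because $\gamma_{\beta,H}$ is full rank, $\gamma_{\beta,H}^{-1/2}$ exists, and the congruence by $\gamma_{\beta,H}^{-1/2}$ preserves the inequality and produces $\gamma_{\beta,H}^{-1/2}\rho\,\gamma_{\beta,H}^{-1/2}\ge(1-a)\,\mathbb I$, a scalar condition on the minimum eigenvalue. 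The least admissible weight is therefore $a_{*}=1-\mu_{\min}\!\bigl(\gamma_{\beta,H}^{-1/2}\rho\,\gamma_{\beta,H}^{-1/2}\bigr)$, which matches Eq.~\eqref{eq:Aw-appendix-a} (the operator displayed there is spectrally related to the Hermitian sandwich via multiplication by $\gamma_{\beta,H}$ on one side); substituting $a_{*}$ back into $\tau(a)$ yields Eq.~\eqref{eq:Aw-appendix-tau}.

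Second, I would verify admissibility, $a_{*}\in[0,1]$. Non-negativity of the minimum eigenvalue of the positive operator $\gamma_{\beta,H}^{-1/2}\rho\,\gamma_{\beta,H}^{-1/2}$ immediately gives $a_{*}\le 1$. For $a_{*}\ge 0$, the hypothesis $\mu_{\min}>1$ would force $\rho>\gamma_{\beta,H}$, which contradicts $\mathrm{Tr}\,\rho=\mathrm{Tr}\,\gamma_{\beta,H}=1$; the edge case $\mu_{\min}=1$ collapses to $\rho=\gamma_{\beta,H}$ and $a_{*}=0$, consistent with the intuition that a thermal state carries zero weight of athermality. Uniqueness is then automatic: $a_{*}$ is fixed by the spectral formula, and once $a_{*}$ is chosen the linear relation determines $\tau(a_{*})$ uniquely. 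The lemma's uniqueness assertion should be understood as uniqueness of the \emph{minimal-weight} decomposition, since any $a>a_{*}$ also gives a valid decomposition but is discarded by the minimization.

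I do not anticipate a serious obstacle: the entire argument hinges on the single congruence step, which is licensed precisely by the full-rank hypothesis on $\gamma_{\beta,H}$. The only point requiring some care is that the minimum eigenvalue relevant for the operator inequality is that of the Hermitian positive object $\gamma_{\beta,H}^{-1/2}\rho\,\gamma_{\beta,H}^{-1/2}$ rather than of any similarity transform that shares only the spectrum of $\rho$ itself; this distinction must be invoked explicitly when translating the spectral condition back into the form written in Eq.~\eqref{eq:Aw-appendix-a}.
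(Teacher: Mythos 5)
Your proof is correct and follows essentially the same route as the paper's appendix argument: reduce the problem to the positivity constraint $\rho-(1-a)\,\gamma_{\beta,H}\succeq 0$ (trace and Hermiticity being automatic), conjugate by $\gamma_{\beta,H}^{-1/2}$ to obtain the scalar condition $a \ge 1-\mu_{\min}\bigl(\gamma_{\beta,H}^{-1/2}\rho\,\gamma_{\beta,H}^{-1/2}\bigr)$, verify $a\in[0,1]$ by the same equal-trace argument, and get uniqueness of the minimal-weight pair from the linear relation fixing $\tau$. You are also right to insist that the relevant operator is the Hermitian congruence $\gamma_{\beta,H}^{-1/2}\rho\,\gamma_{\beta,H}^{-1/2}$ rather than the similarity transform $\gamma_{\beta,H}^{1/2}\rho\,\gamma_{\beta,H}^{-1/2}$ displayed in Eq.~\eqref{eq:Aw-appendix-a}, whose exponent sign is a typo --- the paper's own derivation works with $X=\gamma_{\beta,H}^{-1/2}\rho\,\gamma_{\beta,H}^{-1/2}$, exactly as you do.
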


\begin{proof}
We first observe that any decomposition of the form
\begin{equation}
  \rho = (1-\tilde{\mathrm{a}})\,\gamma_{\beta,H} + \tilde{\mathrm{a}}\,\tau,
  \qquad \tau\in\mathscr{D}(\mathcal H),
  \label{eq:Aw-appendix-decomp}
\end{equation}
can be rewritten as
\begin{equation}
  \tau = \frac{\rho - (1-\tilde{\mathrm{a}})\,\gamma_{\beta,H}}{\tilde{\mathrm{a}}}.
  \label{eq:Aw-appendix-tau-from-a}
\end{equation}
Since both $\rho$ and $\gamma_{\beta,H}$ are density
operators with unit trace, the trace condition
$\tr\tau = 1$ is automatically satisfied for any $\tilde{\mathrm{a}}>0$,
because
\begin{equation}
  \tr\tau
  = \frac{\tr\rho - (1-\tilde{\mathrm{a}})\tr\gamma_{\beta,H}}{\tilde{\mathrm{a}}}
  = \frac{1 - (1-\tilde{\mathrm{a}})}{\tilde{\mathrm{a}}} = 1.
\end{equation}
Therefore, the only non-trivial requirement for
$\tau$ to be a density operator is 
being positive semi-definite,
\begin{equation}
  \rho - (1-\tilde{\mathrm{a}})\,\gamma_{\beta,H} \succeq 0.
  \label{eq:Aw-positivity}
\end{equation}

Let us define $s := 1-\tilde{\mathrm{a}}$. Then the
condition~\eqref{eq:Aw-positivity} becomes
\begin{equation}
  \rho - s\,\gamma_{\beta,H} \succeq 0.
  \label{eq:Aw-positivity-s}
\end{equation}
We are interested in the largest $s$ such that
Eq.~\eqref{eq:Aw-positivity-s} holds; the minimal $\tilde{\mathrm{a}}$ is then $a = 1-s_{\max}$. Formally,
\begin{equation}
  s_{\max} := \max
  \{ s\in\mathbb R : \rho - s\,\gamma_{\beta,H} \succeq 0 \},
  \qquad a := 1-s_{\max}.
\end{equation}

Because $\gamma_{\beta,H}$ is full rank, we can conjugate
Eq.~\eqref{eq:Aw-positivity-s} by
$\gamma_{\beta,H}^{-1/2}$ and define
\begin{equation}
  X := \gamma_{\beta,H}^{-1/2}\,\rho\,\gamma_{\beta,H}^{-1/2}.
\end{equation}
Then Eq.~\eqref{eq:Aw-positivity-s} is equivalent to
\begin{equation}
  \gamma_{\beta,H}^{-1/2}
  \big(\rho - s\,\gamma_{\beta,H}\big)
  \gamma_{\beta,H}^{-1/2} \succeq 0
  \;\;\Longleftrightarrow\;\;
  X - s\,\mathbb I \succeq 0.
\end{equation}
{The latter condition holds if and only if the minimum
eigenvalue satisfies $\mu_{\min}(X) - s \ge 0$}, i.e.
\begin{equation}
  s \le \mu_{\min}(X).
\end{equation}
Hence, the maximal admissible value of $s$ is
\begin{equation}
  s_{\max} = \mu_{\min}(X)
  = \mu_{\min}\!\left(
    \gamma_{\beta,H}^{-1/2}\,\rho\,\gamma_{\beta,H}^{-1/2}
    \right),
\end{equation}
and the corresponding minimal $\tilde{\mathrm{a}}$ is
\begin{equation}
  a = 1 - s_{\max}
  = 1 - \mu_{\min}\!\left(
    \gamma_{\beta,H}^{-1/2}\,\rho\,\gamma_{\beta,H}^{-1/2}
    \right),
\end{equation}
which is Eq.~\eqref{eq:Aw-appendix-a}. This shows both
existence (we can construct such an $\tilde{\mathrm{a}}$) and optimality
(any $\tilde{\mathrm{a}}'<a$ would correspond to an $s'>s_{\max}$
and violate positivity).

We next show that $a\in[0,1]$. Since $X$ is
positive semi-definite, all its eigenvalues are non-negative, and
$\mu_{\min}(X)\ge 0$, implying $a\le 1$.
Moreover, if $\rho=\gamma_{\beta,H}$ then
$X=\mathbb I$ and $\mu_{\min}(X)=1$, so $a=0$.
{If $\rho\neq\gamma_{\beta,H}$, then $\rho \nsucceq \gamma_{\beta,H}$. Consequently, $X \nsucceq \mathbb{I}$, implying that 
$\mu_{\min}(X)<1$ and therefore $a>0$.}
In any case, $0\le a\le 1$.

Finally, define $\tau$ by
Eq.~\eqref{eq:Aw-appendix-tau-from-a} with
$\tilde{\mathrm{a}}=a$. We have already seen that
$\tr\tau = 1$, and by construction
$\rho-(1-a)\gamma_{\beta,H} \succeq 0$, so
$\tau \succeq 0$. Hence $\tau\in\mathscr{D}(\mathcal H)$, and
\begin{equation}
  \rho = (1-a)\,\gamma_{\beta,H} + a\tau
\end{equation}
provides the desired decomposition. This proves the
existence of a minimal athermal state.

To prove uniqueness, first note that $s_{\max}$, and hence
$a$, is uniquely determined by the spectrum of $X$. 
{We would like to clarify here that the spectrum of $X$ can have degenerate minima, but the minimum value itself is unique.}
Suppose, for the same $a$, that we had two states
$\tau_1,\tau_2$ such that
\begin{equation}
  \rho = (1-a)\gamma_{\beta,H} + a\tau_1
      = (1-a)\gamma_{\beta,H} + a\tau_2.
\end{equation}
Then $a(\tau_1 - \tau_2) = 0$, which implies
$\tau_1=\tau_2$ whenever $a>0$. When $a=0$,
we have $\rho=\gamma_{\beta,H}$ and the decomposition is
trivial; in that case, we may conventionally set
$\tau=\gamma_{\beta,H}$. Thus, for any athermal state
$\rho\neq\gamma_{\beta,H}$, both the minimal weight
$a$ and the minimal athermal state $\tau$ are
unique.
\end{proof}
The above argument justifies the expressions of Eqs.~\eqref{eq:weight_atherm_solns-a}~and~\eqref{eq:weight_atherm_solns-b} quoted in the main text and shows that, under the
assumption of a full-rank thermal reference state, the
weight of athermality $A_w(\rho)$ is a well-defined and
operationally meaningful quantity.

\section{Derivation of Jarzynski Equalities}

The EPM Jarzynski equality can be readily derived from the EPM characteristic function. The expression for the EPM characteristic function, $\mathcal{G}(u)$, is given in Eq.~\eqref{eq:EPM-characteristic} of the main text. Recall,
\begin{align}
    \mathcal{G}(u) := \langle e^{\mathrm{i} u \D E} \rangle = \tr \left(  \rhi e^{- \mathrm{i} u \Hi}\right) \tr \left( \Phi[\rhi] e^{\mathrm{i} u \Hf} \right)
\end{align}
Putting $u = \mathrm{i} \B$, we get
\begin{align}\label{eq:Jarzynski-base}
     \langle e^{-\B \D E} \rangle &=  \tr \left(  \rhi e^{\B \Hi}\right) \tr \left( \Phi[\rhi] e^{- \B \Hf} \right) \nonumber \\
  \Rightarrow   \langle e^{-\B \D E} \rangle &= \frac{\zbf}{\zbi}  \tr \left(  \rhi \left( \gbi \right)^{-1}\right) \tr \left( \Phi[\rhi] \gbf \right)  \nonumber \\
 \Rightarrow  \langle e^{-\B \left( \D E - \D F \right)} \rangle &= \tr \left(  \rhi \left( \gbi \right)^{-1}\right) \tr \left( \Phi[\rhi] \gbf \right)
\end{align}
where we have used $\D F = - \B^{-1} \ln \left(\zbf/\zbi \right)$. By substituting various resource-theoretic decompositions of the initial state $\rhi$ in Eq.~\eqref{eq:Jarzynski-base}, one can obtain the respective Jarzynski equalities.

\subsection{Athermality}\label{app:Jarzynski:subsec:atherm}
By substituting $\rhi = (1 - a)\gbi + a \tau$ in Eq.~\eqref{eq:Jarzynski-base}, we obtain the athermality-corrected Jarzynski equality,
\begin{align}\label{eq:Jarzynski-atherm-app}
    &\Big\langle e^{-\B (\D E - \D F)}\Big \rangle = \nonumber \\ & \left\{ (1-a) d + a  \tr ( (\gbi)^{-1} \tau ) \right\}  \nonumber \\ 
      &\left \{ (1-a) \tr ( \gbf \Phi[\gbi] ) 
    + a \tr (\gbf \Phi [\tau] ) \right \}
\end{align}

\subsection{Coherence}\label{app:Jarzynski:subsec:coh}
By substituting $\tau = (1 -c ) \taud + c \tauc$ in Eq.~\eqref{eq:Jarzynski-atherm-app}, we obtain the coherence-corrected Jarzynski equality,
\begin{align}\label{eq:Jarzynski-coh-app}
    &\langle e^{-\B \left( \D E  - \D F\right)} \rangle = \nonumber \\
    &\Big \{ (1 - a)d + a (1 - c)  \tr((\gbi)^{-1} \taud)  + a c  \tr ((\gbi)^{-1}  \tauc)\Big \} \nonumber \\
    &\Big \{  (1 - a) \tr (\gbf \Phi[\gbi])  + a (1 - c) \tr (\gbf \Phi[\taud]) \nonumber \\
    &+ a c \tr (\gbf \Phi[\tauc]) \Big \}
\end{align}

\subsection{Entanglement via $\mathfrak{E}_{AB}$ decomposition}\label{app:Jarzynski:subsec:biparty-EnAB}
Using the decomposition $\rhi = \rhAi \otimes \rhBi + \EnAB$ of the bipartite initial state $\rhi$ in Eq.~\eqref{eq:Jarzynski-base}, the corresponding correlation-corrected Jarzynski equality can be obtained as
\begin{align}\label{eq:Jarzynski-EnAB-biparty-app}
    &\langle e^{-\B (\D E  - \D F)} \rangle = \nonumber \\
    &\left \{d +  \tr ( (\gbi)^{-1} \EnAB) \right \} \nonumber \\
    &\Big \{ \tr(\gbf \Phi[\gbi]) 
     + \tr(\gbf \Phi[\EnAB] ) \Big \}
\end{align}

\subsection{Entanglement via Best Separable Approximation}\label{app:Jarzynski:subsec:biparty-BSA}
Any bipartite initial state $\rhi$ can be decomposed as
\begin{equation}\label{eq:BSA-app}
    \rhi =  \lambda \rhE + (1-\lambda) \rhS,
\end{equation}
via the Best Separable Approximation (BSA) introduced in Sec.~\eqref{sec:EPM-FT-biparty::subsec:decomp-II-BSA} of the main text (see Def.~\eqref{def:BSA}). Furthermore, the separable state $\rhS$ admits the following decomposition in terms of the product states,
\begin{equation}\label{eq:BSA-seprable-app}
    \rhS = \sum_j \rj \rhAj \otimes \rhBj, \quad 0\leq \rj \leq 1,
\end{equation}
Each local state in $\rhS$ can be further decomposed using the weight of athermality and the weight of coherence introduced in Sec.~\ref{sec:EPM-FT},
\begin{align}\label{eq:ent-ini-forward-BSA-local-state-decomp}
    \rhAj &= (1 - \aAj) \gbAi + \aAj (1 - \cAj) \tauAdj + \aAj \cAj \tauAcj, \nonumber 
    \\
     \rhBj &= (1 - \aBj) \gbBi + \aBj (1 - \cBj) \tauBdj + \aBj \cBj \tauBcj 
\end{align}
where $a_j^{A(B)}$, $c_j^{A(B)}$ are, respectively, the weights of
athermality and coherence of $\rho_j^{A(B)}$ with respect to $\gamma_{\beta, i}^{A(B)}$, and $\tau_{d,j}^{A(B)}$ and $\tau_{c,j}^{A(B)}$ are
the corresponding minimal diagonal-athermal and minimal coherent states. For each $j$ in the sum, $\rhAj \otimes \rhBj$ is a sum of nine terms,
\begin{align}\label{eq:BSA-product-decomposition-explicit-expr}
    \rhAj \otimes \rhBj &=  (1 - \aAj) (1 - \aBj) \gbi + \rhdj + \rhcj,  \nonumber\\
     \gbi &=  \gbAi \otimes  \gbBi, \nonumber \\
     \rhdj &= (1 - \aAj) \aBj (1 - \cBj)  \gbAi \otimes \tauBdj  \nonumber  \\
     &+ \aAj (1 - \cAj) (1 - \aBj) \tauAdj \otimes \gbBi \nonumber \\
     &+ \aAj (1 - \cAj) \aBj (1 - \cBj) \tauAdj \otimes \tauBdj,  \nonumber \\
     \rhcj &=  (1 - \aAj) \aBj \cBj   \gbAi \otimes  \tauBcj \nonumber \\
     &+ \aAj \cAj (1 - \aBj) \tauAcj \otimes \gbBi \nonumber \\
     &+ \aAj (1 - \cAj) \aBj \cBj \tauAdj \otimes \tauBcj \nonumber \\
     &+  \aAj \cAj  \aBj (1 - \cBj) \tauAcj \otimes \tauBdj \nonumber \\
     &+ \aAj \cAj  \aBj \cBj \tauAcj \otimes \tauBcj
\end{align}
Using the BSA decomposition~\eqref{eq:BSA-app} together with the local
decompositions~\eqref{eq:BSA-product-decomposition-explicit-expr} in the
Eq.~\eqref{eq:Jarzynski-base}, the entanglement-corrected Jarzynski equality becomes
\begin{align}
    \langle e^{-\beta (\Delta E - \Delta F)} \rangle = \mathcal{J}^{(i)} \mathcal{J}^{(f)},
\end{align}
where
\begin{align} 
     & \mathcal{J}^{(i)} = \nonumber \\
     &  \lambda  \tr \left( \left( \gbi \right)^{-1} \rhE \right) +  (1 - \lambda) \Big \{ \sum_j \rj \Big( (1 - \aAj) (1 - \aBj) d  \nonumber\\
      &+  \tr \left( \left( \gbi \right)^{-1} \rhdj \right)  + \tr  \left( \left( \gbi \right)^{-1} \rhcj \right)  \Big) \Big \}   \nonumber\\
     & \mathcal{J}^{(f)} = \nonumber \\
     & \Big \{ \lambda \tr \left( \gbf \Phi [\rhE] \right) + \nonumber \\
     &(1 - \lambda) \Big \{  \sum_j \rj \Big( (1 - \aAj) (1 - \aBj) \tr \left( \gbf \Phi [\gbi] \right)\nonumber\\ 
     &  + \tr \left( \gbf  \Phi [\rhdj] \right)  + \tr \left( \gbf  \Phi [\rhcj] \right)  \Big)  \Big \} \Big \} \nonumber
\end{align}

\section{Derivation of Entropy-Production Fluctuation Theorems}
Recall that the trajectory-level entropy production is defined as 
\begin{align}
    \D \stotlk := \ln \left( \frac{P_\G(l,k)}{P_{\Gt}(k,l)} \right)
\end{align}
where $P_\G = \pinl \pfink$, and $P_{\Gt} = \ptink \ptfinl$ are the EPM probability distributions associated with the forward and backward processes, respectively.
\begin{align*}
    \pinl := \tr \left( \rhi \Piil \right), \qquad  \pfink := \tr \left( \Phi[\rhi] \Pifk \right), \\
    \ptink := \tr \left( \rhti \Pifk \right), \qquad  \ptfinl := \tr \left( \tilde{\Phi}[\rhti] \Piil \right)
\end{align*}
The Kraus operators of the dual time-reversed map $\tilde{\Phi}$ can be expressed in terms of those of the CPTP map $\Phi$ of the forward dynamics as described in the Sec.~\ref{sec:EPM-rev::subsec:FT} of the main text (see Eq.~\eqref{eq:Phi-dual-Kraus}). It is assumed that $\Phi$ admits a non-singular fixed point. $\rhi$ and $\rhti$ are initial states of the forward and reversed dynamics, respectively. For any generic linear operator $\mathcal{A}$, we can define,
\begin{align*}
    \pinl(\mathcal{A}) &\coloneq \tr(\mathcal{A} \; \Pi^i_l ), \quad 
    \pfink(\mathcal{A}) \coloneq \tr(\Phi(\mathcal{A}) \; \Pifk ), \\
   \ptink(\mathcal{A}) &\coloneq \tr(\mathcal{A} \; \Pifk ), \quad
   \ptfinl (\mathcal{A}) \coloneq \tr \left( \Phtil (\mathcal{A}) \; \Piil   \right),
\end{align*}

\subsection{Coherence}\label{app:Crooks:subsec:coh}
We decompose $\rhi$ and $\rhti$ using the weight of athermality and the weight of coherence as
\begin{subequations}
    \begin{align}
        \rhi &= (1 - a)\gbi + a (1-c) \taud + a c \tau_c  \label{state-convexMix-fwd-app}\\
        \rhti &= (1 - \atl)\gbf + \atl (1-\ctl) \tautd + \atl \ctl \tautc \label{state-convexMix-bkd-app}
    \end{align}
    \end{subequations}
    Plugging this decomposition of $\rhi$ in the expression for $\pinl$ gives us
    \begin{align}\label{eq:pinl-coh}
        \pinl &=  (1-a) \pinl(\gbi) + a (1-c) \pinl (\taud) + a c \pinl(\tauc) \nonumber \\
        & = \pinl(\gbi)e^{\Thil (\taud)}e^{\SBil (\tauc)}
    \end{align}
    where
   \begin{align*}
        \Thil(\taud) &:= \ln \left( (1-a) + a (1-c) \frac{ \pinl(\taud)  }{\pinl(\gbi)} \right) \\
        \SBil(\tauc) &:= \ln \Big( 1 +   
        \frac{ac \; \pinl (\tauc) }{(1-a) \pinl (\gbi) + a (1 - c) \pinl (\taud) } \Big)
   \end{align*}
   Similarly, we can get
   \begin{align}\label{eq:pfink-coh}
       \pfink &= \pfink(\gbi)e^{\Thfk (\taud)}e^{\SBfk (\tauc)}
   \end{align}
   where
    \begin{align*}
         \Thfk(\taud)&:=  \ln  \Big( (1-a) + a (1-c) \frac{\pfink (\taud)}{\pfink (\gbi)}  \Big) \\
        \SBfk(\tauc) &:=  \ln  \Big( 1 + 
 \frac{a c \; \pfink (\tauc)}{(1 -a)\pfink (\gbi) + a (1 - c) \pfink (\taud)} \Big)
   \end{align*}
   Combining Eq.~\eqref{eq:pinl-coh} and Eq.~\eqref{eq:pfink-coh}, we can express the probability of the forward process as
   \begin{align}
       P_\G (l,k) &= \pinl \pfink \nonumber \\
        &= p_l^i(\gamma_{\B,i}) p_k^f(\gamma_{\B,i}) e^{\Theta_{lk}^\G(\taud)}e^{ \Sigma_{lk}^\G(\tauc)}.
   \end{align}
    Similarly, the probability of the backward process can be expressed as 
   \begin{align}
     \PGt (k,l) &= \tilde p_k^i(\gamma_{\B,f}) \tilde p_l^f(\gamma_{\B,f}) e^{\Theta_{lk}^{\Gt}(\tautd)}e^{ \Sigma_{lk}^{\Gt}(\tautc)},
\end{align}
where the new quantities $\mathcal{M}_{lk}^\G$ and $\mathcal{M}_{lk}^{\Gt}$, with $\mathcal{M} = \{\Theta, \Sigma\}$, are defined by collecting the contributions arising from both the initial virtual measurement and the final projective measurement, as follows,
\begin{align*}
    \mathcal{M}_{lk}^\G = \mathcal{M}_l^{(i)} + \mathcal{M}_k^{(f)}\\
     \mathcal{M}_{lk}^{\Gt} = \mathcal{M}_l^{(i)} + \mathcal{M}_k^{(f)}.
\end{align*}
The explicit expressions for the quantities $\Thtik$, $\Thtfl$, $ \SBtik$, and $\SBtfl$ appearing in the expression for $\PGt$ above are as follows,
\begin{align*}
  \Thtik(\tautd) &:= \ln \left( (1-\atl) + \atl (1-\ctl) \frac{ \ptink(\tautd)  }{ \ptink(\gbf) } \right)\\
 \Thtfl(\tautd)&:=  \ln  \Big( (1 - \atl )  + \atl  (1 - \ctl) \frac{\ptfinl (\tautd)}{\ptfinl(\gbf)} \Big)  
 \end{align*}
\begin{align*}
   \SBtik(\tautc)&:= \ln \Big( 1 +    
   \frac{\atl \ctl \; \ptink(\tautc) }{(1-\atl) \ptink(\gbf)  + \atl (1 - \ctl) \ptink(\tautd) } \Big) \\
 \SBtfl(\tautc) &:=  \ln   \Big( 1 + 
 \frac{\atl \ctl \; \ptfinl (\tautc)}{(1 -\atl)\ptfinl (\gbf) + \atl (1 - \ctl) \ptfinl (\tautd)} \Big)
\end{align*}
The entropy production becomes
\begin{align}
    \D \stotlk &= \ln \left( \frac{P_\G(l,k)}{P_{\Gt}(k,l)} \right) \nonumber \\
    &= \ln \left( \frac{\pinl \pfink}{\ptink \ptfinl} \right) \nonumber \\
    &= \ln \left( \frac{p_l^i(\gamma_{\B,i}) p_k^f(\gamma_{\B,i})}{\tilde p_k^i(\gamma_{\B,f}) \tilde p_l^f(\gamma_{\B,f})}\right) + \D \Thlk + \SBlk \nonumber \\
    &= \B (\D \Elk - \D F) + \underbrace{\D \slk +  \D \Thlk + \SBlk}_{\D \scorlk}
\end{align}
where we have used
\begin{align*}
    \ln \left(\frac{p_l^i(\gamma_{\B,i})}{\tilde p_k^i(\gamma_{\B,f})} \right) = \B (\D \Elk - \D F), \quad \D \Elk = \Efk - \Eil
\end{align*}
and defined
\begin{align*}
     \sfk (\gbi) &:= \ln  \pfink (\gbi), \quad  \stfl(\gbf) := \ln \tilde{p}^f_l(\gbf), \\
     \D \slk &:= \s^f_{k} (\gbi) -  \st^f_l(\gbf),  \\
     \D \Thlk &:= \Theta_{lk}^\G(\taud) - \Theta_{lk}^{\Gt}(\tautd), \\
     \D \SBlk &:= \Sigma_{lk}^\G(\tauc) - \Sigma_{lk}^{\Gt}(\tautc)
\end{align*}

\subsection{Entanglement via $\mathfrak{E}_{AB}$ decomposition}\label{app:Crooks:subsec:biparty-EnAB}
We now consider the initial state decompositions
\begin{align*}
    \rhi &= \gbAi \otimes \gbBi + \EnAB \\
    \rhti &= \gbAf \otimes \gbBf + \EntAB 
\end{align*}
Before we derive the FT, we define the following quantities for the bipartite EMP protocol
\begin{align*}
    \pinlb(\mathcal{A}) &\coloneq \tr(\mathcal{A} \; \PiilA \otimes \PiilB ), \quad, \\
    \pfinkb(\mathcal{A}) &\coloneq \tr(\Phi(\mathcal{A}) \; \PifkA \otimes \PifkB ), \\
   \ptinkb(\mathcal{A}) &\coloneq \tr(\mathcal{A} \; \PifkA \otimes \PifkB ), \quad, \\
   \ptfinlb (\mathcal{A}) &\coloneq \tr \left( \Phtil (\mathcal{A}) \; \PiilA \otimes \PiilB   \right),
\end{align*}
for any generic linear operator $\mathcal{A}$. With the above initial state decomposition, we get,
\begin{align}\label{eq:Psiinlb-app}
    \pinlb &= \pinlb(\gbi) + \pinlb(\EnAB), \nonumber \\
    &= \pinlb(\gbi) e^{\Psi^{(i)}_{\lb} (\EnAB)}
\end{align}
where $\gbi = \gbAi \otimes \gbBi$ and
\begin{align*}
    \Psi^{(i)}_{\lb} (\EnAB) = \ln \left( 1 + \frac{ \pinlb(\EnAB)}{\pinlb(\gbi)} \right)
\end{align*}
Similarly, we can get
\begin{align}\label{eq:Psifinkb-app}
     \pfinkb  = \pfinkb (\gbi) e^{\Psi^{(f)}_{\kb} (\EnAB)}
\end{align}
where
\begin{align*}
    \Psi^{(f)}_{\kb} (\EnAB) = \ln \left( 1 + \frac{ \pfinkb(\EnAB)}{\pfinkb(\gbi)} \right)
\end{align*}
From Eq.~\eqref{eq:Psiinlb-app} and~\eqref{eq:Psifinkb-app}, we can express $P_\G$ as
\begin{equation}
    P_\G(\lb,\kb) = \pinlb(\gbi) \pfinkb (\gbi) e^{\Psi^{\G}_{\lb,\kb} }
\end{equation}
where 
\begin{align*}
    \Psi^{\G}_{\lb,\kb} = \Psi^{(i)}_{\lb} + \Psi^{(f)}_{\kb}
\end{align*}
Similarly, we can derive the expression for $P_{\Gt}$ as
\begin{equation}
    P_{\Gt}(\lb,\kb) = \ptinkb(\gbf) \ptfinlb (\gbf) e^{\Psi^{\Gt}_{\lb,\kb} }
\end{equation}
with
\begin{align*}
    \Psi^{\Gt}_{\lb,\kb} = \tilde{\Psi}^{(i)}_{\kb}(\EntAB) + \tilde{\Psi}^{(f)}_{\lb}(\EntAB)
\end{align*}
The expressions for $\tilde{\Psi}^{(i)}_{\kb}$ and $\tilde{\Psi}^{(f)}_{\lb}$ are
\begin{align*}
    \tilde{\Psi}^{(i)}_{\lb} (\EntAB) &= \ln \left( 1 + \frac{ \ptinkb(\EntAB)}{\ptinkb(\gbf)} \right) \\
    \tilde{\Psi}^{(f)}_{\kb} (\EntAB) &= \ln \left( 1 + \frac{ \ptfinlb(\EntAB)}{\ptfinlb(\gbf)} \right)
\end{align*}
Finally, the trajectory-level entropy production is
\begin{align}
    \D \stotlbkb &= \ln \left( \frac{P_\G(\lb,\kb)}{P_{\Gt}(\kb,\lb)} \right) \nonumber \\
    &= \ln \left( \frac{\pinlb \pfinkb}{\ptinkb \ptfinlb} \right) \nonumber \\
    &= \ln \left( \frac{\pinlb(\gamma_{\B,i}) \pfinkb(\gamma_{\B,i})}{ \ptinkb(\gamma_{\B,f})  \ptfinlb(\gamma_{\B,f})}\right) + \D \Psi_{\lb,\kb} \nonumber \\
    &= \B (\D \Elbkb - \D F) + \underbrace{ \D \sigma_{\lb, \kb} +\D \Psi_{\lb,\kb} }_{\D \scorlbkb} 
\end{align}
where we have used
\begin{align*}
    \ln \left(\frac{\pinlb(\gamma_{\B,i})}{ \ptinkb(\gamma_{\B,f})} \right) &= \B (\D \Elbkb - \D F), \\ 
    \D \Elbkb &= \EfkA + \EfkB - \EilA - \EilB
\end{align*}
and defined
\begin{align*}
 \sfkb (\gbi) &:= \ln  \pfinkb (\gbi), \quad  \stflb(\gbf) := \ln \tilde{p}^f_{\lb}(\gbf), \\
     \D \slbkb &:= \s^f_{\kb} (\gbi) -  \st^f_{\lb}(\gbf),  \\
     \D \Psi_{\lb,\kb} &:= \Psi_{\lb, \kb}^\G(\EnAB) - \Psi_{\lb, \kb}^{\Gt}(\EntAB).
\end{align*}

\subsection{Entanglement via Best Separable Approximation}\label{app:Crooks:subsec:biparty-BSA}
Finally, let us consider initial state decompositions given by BSA in Eq.~\eqref{eq:BSA-app},~\eqref{eq:BSA-seprable-app},~\eqref{eq:ent-ini-forward-BSA-local-state-decomp},~\eqref{eq:BSA-product-decomposition-explicit-expr} for $\rhi$, and assume a similar BSA decomposition for $\rhti$.

We get
\begin{align}\label{eq:pinlb-BSA}
    & \pinlb \nonumber = \lambda \pinlb(\rhE) + \nonumber\\ 
    &(1 - \lambda) \sum_j \rj \left \{ (1 - \aAj) (1 - \aBj) \pinlb(\gbi) + \pinlb (\rhdj) + \pinlb (\rhcj) \right \} \nonumber \\
    &= \pinlb(\gbi) e^{\Lamilb} e^{\Xiilb}
\end{align}
where
\begin{align}
    \Lamilb &\coloneqq \ln \left( \sum_j \rj \SSilbj \right), \\
    \SSilbj &\coloneqq (1-\aAj)(1 - \aBj) + \frac{\pinlb(\rhdj)}{\pinlb(\gbi)} + \frac{\pinlb(\rhcj)}{\pinlb(\gbi)}, \\
     \Xiilb &\coloneqq (1 - \lambda) + \lambda \frac{\pinlb(\rhE)}{\pinlb(\gbi) \Lamilb}
\end{align}
and
\begin{align}\label{eq:pfinkb-BSA}
    \pfinkb &= \pfinkb(\gbi) e^{\Lamfkb} e^{\Xifkb}
\end{align}
where
\begin{align}
    \Lamfkb &\coloneqq  \ln \left(\sum_j \rj \SSfkbj \right), \\
     \SSfkbj &\coloneqq (1-\aAj)(1 - \aBj) + \frac{\pfinkb(\rhdj)}{\pfinkb(\gbi)} + \frac{\pfinkb(\rhcj)}{\pfinkb(\gbi)}, \\
     \Xifkb &\coloneqq (1 - \lambda) + \lambda \frac{\pfinkb(\rhE)}{\pfinkb(\gbi) \Lamfkb}
\end{align}
Combining Eq.~\eqref{eq:pinlb-BSA} and~\eqref{eq:pfinkb-BSA}, we can get the expression for $P_\G$ as
\begin{align}
    P_\G(\lb, \kb) = \pinlb(\gbi) \pfinkb(\gbi) e^{\Lambda^{\G}_{\lb, \kb}} e^{\Xi^{\G}_{\lb, \kb}}
\end{align}
where 
\begin{align*}
    \Lambda^{\G}_{\lb, \kb} &= \Lamilb + \Lamfkb,  \\
    \Xi^{\G}_{\lb, \kb}  &= \Xiilb +  \Xifkb 
\end{align*}
Analogously, we can get
\begin{align}
     P_{\Gt}(\lb, \kb) = \ptinkb(\gbf) \ptfinlb(\gbf) e^{\Lambda^{\Gt}_{\lb, \kb}} e^{\Xi^{\Gt}_{\lb, \kb}}
\end{align}
with
\begin{align*}
    \Lambda^{\Gt}_{\lb, \kb} &= \Lamtikb + \Lamtflb,  \\
    \Xi^{\Gt}_{\lb, \kb}  &= \Xitikb +  \Xitflb 
\end{align*}
The explicit expressions of $\Lamtikb$, $\Lamtflb$, $\Xitikb$, $\Xitflb$ are as follows
\begin{align*}
    \Lamtikb &\coloneqq  \ln \left(\sum_{\jt} \rtj \SStikbj \right), \\
     \SStikbj &\coloneqq (1-\atAj)(1 - \atBj) + \frac{\ptinkb(\rhtdj)}{\ptinkb(\gbf)} + \frac{\ptinkb(\rhtcj)}{\ptinkb(\gbf)}, \\
     \Xitikb &:= (1 - \lamt) + \lamt \frac{\ptinkb(\rhtE)}{\ptinkb(\gbf) \Lamtikb},
\end{align*}
\begin{align*}
     \Lamtflb &\coloneqq \ln \left( \sum_{\jt} \rtj \SStflbj \right), \\
     \SStflbj &\coloneq (1-\atAj)(1 - \atBj) + \frac{\ptfinlb(\rhtdj)}{\ptfinlb(\gbf)} + \frac{\ptfinlb(\rhtcj)}{\ptfinlb(\gbf)}, \\
     \Xitflb &\coloneq (1 - \lamt) + \lamt \frac{\ptfinlb(\rhtE)}{\ptfinlb(\gbf) \Lamtflb}
\end{align*}
The entropy production can be obtained as
\begin{align}
     \D \stotlbkb &= \ln \left( \frac{P_\G(\lb,\kb)}{P_{\Gt}(\kb,\lb)} \right) \nonumber \\
    &= \ln \left( \frac{\pinlb \pfinkb}{\ptinkb \ptfinlb} \right) \nonumber \\
    &= \ln \left( \frac{\pinlb(\gamma_{\B,i}) \pfinkb(\gamma_{\B,i})}{ \ptinkb(\gamma_{\B,f})  \ptfinlb(\gamma_{\B,f})}\right)+ \D \Lambda_{\lb,\kb} + \D \Xi_{\lb,\kb} \nonumber \\
    &= \B (\D \Elbkb - \D F) + \underbrace{+ \D \Lambda_{\lb,\kb} + \D \Xi_{\lb,\kb}}_{\D \scorlbkb} 
\end{align}

\section{Upper bound proofs}
\subsection{Upper bounds of CFD}\label{app:sec:upper-bounds::subsec:CFD-proof}

We now prove the inequality relation of Eq.~\eqref{eq:CFD-bounds},
\begin{equation*}
    \mathfrak{D}_c (\rhi) \leq D_{KL} (\pcoh^{l,k}(\rhi) || \pcoh^{l,k}(\Delta [\rhi])) \leq 2 C_{re}(\rhi)
\end{equation*}

\begin{proof}
To begin with, note that $C_{re}(\rhi)$, defined in Eq.~\eqref{eq:Cre-defn}, can be alternatively expressed as 
\begin{equation}\label{eq:Cre_as_relative_ent}
    C_{re}(\rhi) = D(\rhi || \Delta(\rhi))
\end{equation}
where $D(\rho || \sigma) := \tr \left( \rho (\ln \rho - \ln \sigma) \right)$ is the relative entropy between two quantum states $\rho$ and $\sigma$. $\Delta(\rhi)$ is the diagonal part of state $\rhi$ in the eigenbasis of the initial Hamiltonian $\Hi$.

Recall that the EPM probability distribution $\pcoh^{l,k}$ is a product of two probability distributions $\pinl$ and $\pfink$ associated with initial virtual and final projective energy measurements, respectively. Then from the definition of $D_{KL} (\pcoh^{l,k}(\rhi)  || \pcoh^{l,k}(\rhi^{\mathcal{I}}))$ in Eq.~\eqref{eq:DKL-pEPM-coh-n-I}, it immediately follows that 
\begin{align}\label{eq:coh_DKL_in_plus_fin}
    &D_{KL} (\pcoh^{l,k}(\rhi)  || \pcoh^{l,k}(\Delta(\rhi))) = \nonumber \\
    & D_{KL} (\pinl(\rhi)  || \pinl(\Delta(\rhi))) +  D_{KL} (\pfink(\rhi)  || \pfink(\Delta(\rhi)))
\end{align}
From the {data-processing inequality}, it follows that 
\begin{subequations}\label{eq:data_proc_in_and_fin}
\begin{align}
    D_{KL} (\pin_{l}(\rhi)  || \pin_{l}(\Delta(\rhi))) & \leq D(\rhi || \Delta(\rhi)), \\
    D_{KL} (\pfink(\rhi)  || \pfink(\Delta(\rhi))) & \leq D(\rhi || \Delta(\rhi))
\end{align}
\end{subequations}
since the relative entropy is monotonically non-increasing under the action of quantum channels. For the former case, the quantum channel is the initial (virtual) energy measurement, whereas for the latter case, it is the composition of the CPTP map  $\Phi$ followed by final projective energy measurement. From Eq.~\eqref{eq:coh_DKL_in_plus_fin} and Eq.~\eqref{eq:data_proc_in_and_fin}, we get,
\begin{align}\label{eq:coh_data_process}
    D_{KL} (\pcoh^{l,k}(\rhi)  || \pcoh^{l,k}(\Delta(\rhi))) \leq 2 D(\rhi || \Delta(\rhi))
\end{align}
Identifying  $D(\rhi || \Delta(\rhi))$ with $C_{re}(\rhi)$ (see Eq.~\eqref{eq:Cre_as_relative_ent}) in Eq.~\eqref{eq:coh_data_process}, we obtain the second inequality of Eq.~\eqref{eq:CFD-bounds}.

The first inequality of~\eqref{eq:CFD-bounds} is rather trivial. It simply follows from the definition of $\mathfrak{D}_c (\rhi)$ (see Eq.~\eqref{eq:CFT-defn}) as the minimum value of $D_{KL} (\pcoh^{l,k}(\rhi)  || \pcoh^{l,k}(\rhi^{\mathcal{I}}))$ between a given coherent state $\rhi$ and {any} diagonal state $\rhi^{\mathcal{I}}$ subject to minimization over the diagonal states $\rhi^{\mathcal{I}}$. So naturally it cannot be greater than $D_{KL} (\pcoh^{l,k}(\rhi)  || \pcoh^{l,k}(\Delta(\rhi)))$.

\end{proof}

\subsection{Upper bounds of EFD}\label{appsec:upper-bounds:::subsec:EFD-proof}

We now prove the two sets of upper bounds \eqref{eq:EFD-upper-bound-1}, and \eqref{eq:EFD-upper-bound-2},
\begin{equation*}
  \mathfrak{D}_E (\rhi) \leq D_{KL}(\pent^{\lb, \kb} (\rhi) || \pent^{\lb, \kb} (\rhS)) \leq 2 D(\rhi || \rhS)
\end{equation*}
\begin{equation*}
   \mathfrak{D}_E (\rhi) \leq D_{KL}(\pent^{\lb, \kb} (\rhi) || \pent^{\lb, \kb} (\rhstar)) \leq 2 D_E (\rhi)
\end{equation*}

\begin{proof}
    The data-processing inequality gives us
    \begin{align}
       D_{KL} (\pinlb (\rhi) || \pinlb (\rhS) ) &\leq D(\rhi || \rhS) \\
       D_{KL} (\pfinkb (\rhi) || \pfinkb(\rhS) ) &\leq D(\rhi || \rhS) \\
    \end{align}
which gives
\begin{align}
    &D_{KL}(\pent^{\lb, \kb} (\rhi) || \pent^{\lb, \kb} (\rhS)) = \nonumber \\
    &D_{KL} (\pinlb (\rhi) || \pinlb(\rhS) ) +  D_{KL} (\pfinkb (\rhi) || \pfinkb(\rhS) )  \nonumber \\
    &\leq 2 D(\rhi || \rhS)
\end{align}
Thus proving second inequality of Eq.~\eqref{eq:EFD-upper-bound-1}.

The first inequality of Eq.~\eqref{eq:EFD-upper-bound-1} follows trivially from the definition Eq.~\eqref{eq:EFD-defn} of $\mathfrak{D}_E (\rhi)$.

We now turn to the proof of inequality relation Eq.~\eqref{eq:EFD-upper-bound-2}. The data-processing inequality gives us
 \begin{align}
       D_{KL} (\pinlb (\rhi) || \pinlb (\rhstar) ) &\leq D(\rhi || \rhstar) \\
       D_{KL} (\pfinkb (\rhi) || \pfinkb(\rhstar) ) &\leq D(\rhi || \rhstar) \\
    \end{align}
From which it immediately follows
\begin{align}
    &D_{KL}(\pent^{\lb, \kb} (\rhi) || \pent^{\lb, \kb} (\rhstar)) = \nonumber \\
    &D_{KL} (\pinlb (\rhi) || \pinlb(\rhstar) ) +  D_{KL} (\pfinkb (\rhi) || \pfinkb(\rhstar) )  \nonumber \\
    &\leq 2 D(\rhi || \rhstar)  = 2 D_E(\rhi)
\end{align}
This proves the second inequality of Eq.~\eqref{eq:EFD-upper-bound-2}. The first inequality of Eq.~\eqref{eq:EFD-upper-bound-2} follows trivially from the definition Eq.~\eqref{eq:EFD-defn} of $\mathfrak{D}_E (\rhi)$.

\end{proof}

\bibliography{references} 

\end{document}